\documentclass[onefignum,onetabnum]{siamart190516}

\bibliographystyle{siamplain}

\author{R\'emy Belmonte\thanks{University of Electro-Communications, Chofu, Tokyo
(\email{remybelmonte@gmail.com}).} \and Eun Jung Kim\thanks{Universit\'e
Paris-Dauphine, PSL University, CNRS, LAMSADE, Paris.
(\email{eun-jung.kim@dauphine.fr}, \email{michail.lampis@dauphine.fr}).} \and
Michael Lampis\footnotemark[3] \and Valia Mitsou\thanks{Universit\'e de Paris,
IRIF, CNRS, Paris.(\email{vmitsou@irif.fr})} \and Yota Otachi\thanks{Nagoya
University, Nagoya, 464-8601, Japan(\email{otachi@nagoya-u.jp})}}

\usepackage{subcaption}
\usepackage{amsopn}


\headers{Grundy Distinguishes Treewidth from Pathwidth}{R. Belmonte, E.J. Kim, M. Lampis, V. Mitsou, Y. Otachi}

\title{Grundy Distinguishes Treewidth from Pathwidth\thanks{Accepted for publication in SIDMA. A conference version of this paper appeared in ESA 2020.
\funding{All authors supported under the PRC CNRS JSPS 2019-2020 program, project PARAGA (Parameterized Approximation Graph Algorithms). R\'emy Belmonte was partially supported by JSPS KAKENHI Grant Number JP18K11157. Eun Jung Kim and Michael Lampis were partially supported by ANR JCJC Grant Number 18-CE40-0025-01 (ASSK) and 21-CE48-0022 (S-EX-AP-PE-AL). Yota Otachi was partially supported by JSPS KAKENHI Grant Numbers JP18K11168, JP18K11169, JP18H04091.}}}

\usepackage{amsfonts}
\usepackage{graphicx}
\usepackage{epstopdf}
\usepackage{algorithmic}
\ifpdf
  \DeclareGraphicsExtensions{.eps,.pdf,.png,.jpg}
\else
  \DeclareGraphicsExtensions{.eps}
\fi


\newsiamremark{remark}{Remark}
\newsiamremark{hypothesis}{Hypothesis}
\crefname{hypothesis}{Hypothesis}{Hypotheses}
\newsiamthm{claim}{Claim}

\newcommand\grundy{\textsc{Grundy Coloring}~}

\begin{document}

\maketitle

\begin{abstract}

Structural graph parameters, such as treewidth, pathwidth, and clique-width,
are a central topic of study in parameterized complexity. A main aim of
research in this area is to understand the ``price of generality'' of these
widths: as we transition from more restrictive to more general notions, which
are the problems that see their complexity status deteriorate from
fixed-parameter tractable to intractable?  This type of question is by now very
well-studied, but, somewhat strikingly, the algorithmic frontier between the
two (arguably) most central width notions, treewidth and pathwidth, is still
not understood: currently, no natural graph problem is known to be W-hard for
one but FPT for the other. Indeed, a surprising development of the last few
years has been the observation that for many of the most paradigmatic problems,
their complexities for the two parameters actually coincide exactly, despite
the fact that treewidth is a much more general parameter. It would thus appear
that the extra generality of treewidth over pathwidth  often comes ``for
free''.

Our main contribution in this paper is to uncover the first natural example
where this generality comes with a high price. We consider \textsc{Grundy
Coloring}, a variation of coloring where one seeks to calculate the worst
possible coloring that could be assigned to a graph by a greedy First-Fit
algorithm.  We show that this well-studied problem is FPT parameterized by
pathwidth; however, it becomes significantly harder (W[1]-hard) when
parameterized by treewidth.  Furthermore, we show that \textsc{Grundy Coloring}
makes a second complexity jump for more general widths, as it becomes
paraNP-hard for clique-width.  Hence, \textsc{Grundy Coloring} nicely captures
the complexity trade-offs between the three most well-studied parameters.
Completing the picture, we show that \textsc{Grundy Coloring} is FPT
parameterized by modular-width.

\end{abstract}

\begin{keywords}
Treewidth, Pathwidth, Clique-width, Grundy Coloring
\end{keywords}

\section{Introduction}

The study of the algorithmic properties of \emph{structural graph parameters}
has been one of the most vibrant research areas of parameterized complexity in
the last few years. In this area we consider graph complexity measures (``graph
width parameters''), such as treewidth, and attempt to characterize the class
of problems which become tractable for each notion of width.  The most
important graph widths are often comparable to each other in terms of their
generality.  Hence, one of the main goals of this area is to understand which
problems separate two comparable parameters, that is, which problems transition
from being FPT for a more restrictive parameter to W-hard for a more general
one\footnote{We assume the reader is familiar with the basics of parameterized
complexity theory, such as the classes FPT and W[1], as given in standard
textbooks \cite{CyganFKLMPPS15}.}.  This endeavor is sometimes referred to as
determining the ``price of generality'' of the more general parameter. 

Treewidth and pathwidth, which have an obvious containment relationship to each
other, are possibly the two most well-studied graph width parameters.  Despite
this, to the best of our knowledge, no natural problem is currently known to
delineate their complexity border in the sense we just described. Our main
contribution is exactly to uncover a natural, well-known problem which fills
this gap.  Specifically, we show that \textsc{Grundy Coloring}, the problem of
ordering the vertices of a graph to maximize the number of colors used by the
First-Fit coloring algorithm, is FPT parameterized by pathwidth, but W[1]-hard
parameterized by treewidth. We then show that \textsc{Grundy Coloring} makes a
further complexity jump if one considers clique-width, 
as in this case the problem
is paraNP-complete. Hence, \textsc{Grundy Coloring} turns out to be an 
interesting specimen, nicely demonstrating the algorithmic
trade-offs involved among the three most central graph widths.

\subparagraph*{Graph widths and the price of generality}  Much of modern
parameterized complexity theory is centered around studying 
graph widths, especially treewidth and its
variants.  In this paper we focus on the parameters 
summarized in Figure \ref{fig:params}, and especially the parameters that form
a linear hierarchy, from vertex cover, to tree-depth, pathwidth, treewidth, and
clique-width. Each of these parameters is 
a strict generalization
of the previous ones in this list. On the algorithmic level we would expect
this relation to manifest itself by the appearance of more and more problems
which become \emph{intractable} as we move towards the more general parameters.
Indeed, a search through the literature reveals that for each step in
this list of parameters, 
several \emph{natural} problems have been
discovered which distinguish the two consecutive parameters (we give more
details below)
.  The one glaring exception to this
rule seems to be the relation between treewidth and pathwidth.

Treewidth is a parameter of central importance to parameterized algorithmics,
in part because wide classes of problems (notably all MSO$_2$-expressible
problems \cite{Courcelle90}) are FPT for this parameter. Treewidth is usually
defined in terms of tree decompositions of graphs, which naturally leads to the
equally well-known notion of pathwidth, defined by forcing the decomposition to
be a path. On a graph-theoretic level, the difference between the two notions
is well-understood and treewidth is known to describe a much richer class of
graphs. In particular, while all graphs of pathwidth $k$ have treewidth at most
$k$, there exist graphs of constant treewidth (in fact, even trees) of
unbounded pathwidth. Naturally, one would expect this added richness of
treewidth to come with some negative algorithmic consequences in the form of
problems which are FPT for pathwidth but W-hard for treewidth.  Furthermore,
since treewidth and pathwidth are probably the most studied parameters in our
list, one might expect the problems that distinguish the two to be the first
ones to be discovered.

Nevertheless, so far this (surprisingly) does not seem to have been the case: on the one hand, FPT algorithms for pathwidth are DPs which also extend to treewidth; on the other hand, 
we give (in Section~\ref{sec:list}) a semi-exhaustive list of dozens of natural 
problems which are W[1]-hard for
treewidth and turn out without exception to also be hard for pathwidth. 
In fact, even when this is sometimes not explicitly stated in the
literature, the same reduction that establishes W-hardness by treewidth also does so for pathwidth. Intuitively, an explanation for
this phenomenon is that the basic structure of such reductions typically
resembles a $k\times n$ (or smaller) grid, which has both
treewidth and pathwidth bounded by $k$.  

Our main motivation in this paper is to take a closer
look at the algorithmic barrier between pathwidth and treewidth and try to
locate a natural (that is, not artificially contrived) problem whose
complexity transitions from FPT to W-hard at this barrier.  Our main result is
the proof that \textsc{Grundy Coloring} is such a problem.  This puts in the
picture the last missing piece of the puzzle, as we now have natural problems
that distinguish the parameterized complexity of any two consecutive parameters
in our main hierarchy.


\begin{figure}[htb]

\begin{tabular}{lr}
\begin{minipage}{0.34\textwidth}
\includegraphics[scale=0.75]{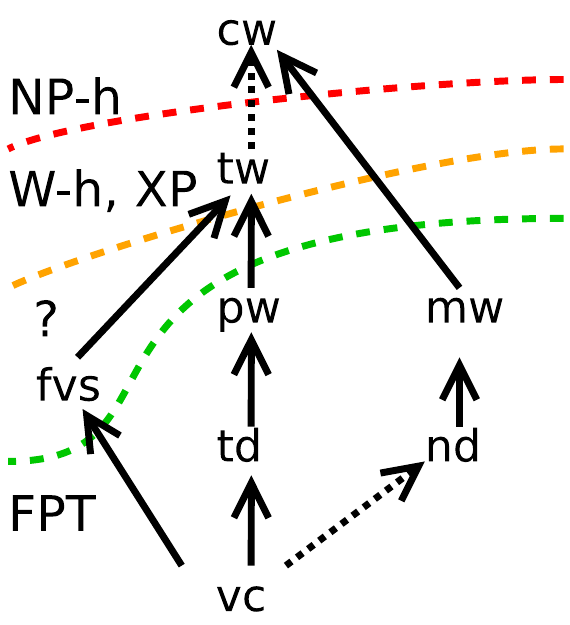}
\end{minipage} &

\begin{minipage}{0.59\textwidth}

\begin{tabular}{lll}
\hline
Parameter & Result & Ref \hfill \\
\hline
Clique-width & paraNP-hard & Thm~\ref{thm:cw} \\
Treewidth & W[1]-hard & Thm~\ref{thm:tw} \\
Pathwidth & FPT & Thm~\ref{thm:pw} \\
Modular-width & FPT & Thm~\ref{thm:mw} \\
\hline
\end{tabular}

\medskip

In the figure, clique-width, treewidth, pathwidth, tree-depth, vertex cover,
feedback vertex set, neighborhood diversity, and modular-width are indicated as
cw, tw, pw, td, vc, fvs, nd, and mw respectively. Arrows indicate more general
parameters. Dotted arrows indicate that the parameter may increase
exponentially, (e.g. graphs of vc $k$ have nd at most $2^k+k$).

\end{minipage}
\end{tabular}
\caption{Summary of considered graph parameters and results. }
\label{fig:params} \end{figure}

\subparagraph*{Grundy Coloring} In the \textsc{Grundy Coloring} problem we are
given a graph $G=(V,E)$ and are asked to order 
$V$ in a way that maximizes the number of colors used by the greedy (First-Fit) coloring algorithm. 
The notion of Grundy coloring was
first introduced by Grundy in the 1930s, and later formalized in
\cite{ChristenS79}.  Since then, the complexity of \textsc{Grundy Coloring} has
been very well-studied (see
\cite{ABKS20,AraujoS12,BonnetLP18,ErdosHLP03,GyarfasL88,HavetS13,KiersteadS11,Kortsarz07,TangWHZ17,TelleP97,Zaker05,Zaker06,Zaker07a}
and the references therein).  For the natural parameter, namely the number of
colors to be used, Grundy coloring was recently proved to be W[1]-hard
in~\cite{ABKS20}.  An XP algorithm for \textsc{Grundy Coloring} parameterized
by treewidth was given in \cite{TelleP97}, using the fact that the Grundy
number of any graph is at most $\log n$ times its treewidth.  In
\cite{BonnetFKS18} Bonnet et al.\ explicitly asked whether this can be improved
to an FPT algorithm. They also observed that the problem is FPT parameterized
by vertex cover. It appears that the complexity of \textsc{Grundy Coloring}
parameterized by pathwidth was never explicitly posed as a question and it was
not suspected that it may differ from that for treewidth. We note that, since
the problem  can be seen to be MSO$_1$-expressible for a fixed Grundy number
(indeed in Definition \ref{def:grundy} we reformulate it as a coloring problem
where each color class dominates later classes, which is an MSO$_1$-expressible
property), it is FPT for all considered parameters if the Grundy number is also
a parameter \cite{CourcelleMR00}, so we intuitively want to concentrate on
cases where the Grundy number is large.

\subparagraph*{Our results} Our results illuminate the complexity of \textsc{Grundy Coloring}
parameterized by pathwidth and treewidth, as well as clique-width and
modular-width. More specifically:

\begin{enumerate}

\item We show that \textsc{Grundy Coloring} is W[1]-hard parameterized by treewidth via a
reduction from $k$-\textsc{Multi-Colored Clique}. The main building block of our reduction is the structure of binomial trees, which have treewidth one but unbounded pathwidth, which explains the complexity jump between the two parameters. As mentioned, an XP algorithm
is known in this case \cite{TelleP97}, so this result is in a sense tight.

\item We observe that \textsc{Grundy Coloring} is FPT parameterized by
pathwidth. Our main tool here is a combinatorial lemma stating that on any
graph the Grundy number is at most a linear function of the pathwidth, which
was first shown in \cite{DujmovicJW12}, using previous results on the First-Fit
coloring of interval graphs \cite{KST16,NB08}. To obtain an FPT algorithm we
simply combine this lemma with the algorithm of \cite{TelleP97}.

\item We show that \textsc{Grundy Coloring} is paraNP-complete parameterized by clique-width,
that is, NP-complete for graphs of constant clique-width (specifically,
clique-width $8$).

\item We show that \textsc{Grundy Coloring} is FPT parameterized by neighborhood diversity
(which is defined in~\cite{Lampis12}) and leverage this result to obtain an FPT algorithm parameterized by
modular-width (which is defined in~\cite{GajarskyLO13}).
\end{enumerate}

Our main interest is concentrated in the first two results, which
achieve our goal of finding a natural problem distinguishing pathwidth from
treewidth. The result for clique-width nicely fills out the picture by
giving an intuitive view of the evolution of the complexity of the problem and
showing that in a case where no non-trivial bound can be shown on the optimal
value, the problem becomes hopelessly hard from the parameterized point of
view. 

\subparagraph*{Other related work} 
Let us now give a brief survey of ``price
of generality'' results involving our considered parameters, that is, results
showing that a problem is efficient for one parameter but hard for a more general
one.  In this area, the results of Fomin et al. \cite{FominGLS10}, introducing the term ``price of generality'', have been particularly impactful.
This work and its follow-ups \cite{FominGLS14,FominGLSZ19}, were the first to
show that four natural graph problems (\textsc{Coloring, Edge Dominating Set,
Max Cut, Hamiltonicity}) which are FPT for treewidth, become W[1]-hard for
clique-width. In this sense, these problems, as well as problems discovered
later such as counting perfect matchings \cite{CurticapeanM16}, \textsc{SAT}
\cite{OrdyniakPS13,DellKLMM17}, $\exists\forall$-\textsc{SAT} \cite{LampisM17},
\textsc{Orientable Deletion} \cite{HanakaKLOS18}, and $d$-\textsc{Regular
Induced Subgraph} \cite{BroersmaGP13}, form part of the ``price'' we have to
pay for considering a more general parameter. This line of research has thus
helped to illuminate the complexity border between the two most important
sparse and dense parameters (treewidth and clique-width), by giving a list of
\emph{natural} problems distinguishing the two. (An artificial
MSO$_2$-expressible such problem was already known much earlier
\cite{CourcelleMR00,abs-1302-4266}).


Let us now focus in the area below treewidth in Figure \ref{fig:params} by
considering problems which are in XP but W[1]-hard parameterized by treewidth.
By now, there is a small number of problems in this category which are known to
be W[1]-hard even for vertex cover: \textsc{List Coloring}
\cite{FellowsFLRSST11} was the first such problem, followed by \textsc{CSP}
(for the vertex cover of the dual graph) \cite{SamerS10}, and more recently by
$(k,r)$-\textsc{Center}, $d$-\textsc{Scattered Set}, and \textsc{Min Power
Steiner Tree} \cite{KatsikarelisLP19,KatsikarelisLP18,KaurM20} on weighted
graphs. Intuitively, it is not surprising that problems W[1]-hard parameterized
by vertex cover are few and far between, since this is a very restricted
parameter.  Indeed, for most problems in the literature which are W[1]-hard by
treewidth, vertex cover is the only parameter (among the ones considered here)
for which the problem becomes FPT.

A second interesting category are problems which are FPT for tree-depth
(\cite{NesetrilM06}) but W[1]-hard for pathwidth. 
\textsc{Mixed Chinese Postman Problem} was the first discovered problem of this
type \cite{GutinJW16}, followed by \textsc{Min Bounded-Length Cut}
\cite{DvorakK18,bentert2019lengthbounded}, \textsc{ILP} 
\cite{GanianO18}, \textsc{Geodetic Set} \cite{kellerhals2020parameterized} and
unweighted $(k,r)$-\textsc{Center} and  $d$-\textsc{Scattered Set}
\cite{KatsikarelisLP19,KatsikarelisLP18}. 
More recently, $(A,\ell)$-\textsc{Path Packing} was also shown to belong in this category \cite{BelmonteHKK0KLO20}.

\smallskip

To the best of our knowledge, for all remaining problems which are known to be
W[1]-hard by treewidth, the reductions that exist in the literature also establish 
W[1]-hardness for pathwidth. Below we give a (semi-exhaustive) list of problems 
which are known to be W[1]-hard by treewidth. After reviewing 
the relevant works we have verified that all of the following problems are in 
fact shown to be W[1]-hard parameterized by pathwidth (and in many case by 
feedback vertex set and tree-depth), even if this is not explicitly claimed.

\subsection{Known problems which are W-hard for treewidth and for
pathwidth}\label{sec:list}

\begin{itemize}

\item \textsc{Precoloring Extension} and \textsc{Equitable Coloring} are shown
to be W[1]-hard for both tree-depth and feedback vertex set in
\cite{FellowsFLRSST11} (though the result is claimed only for treewidth). This
is important, because \textsc{Equitable Coloring} often serves as a starting
point for reductions to other problems. A second hardness proof for this
problem was recently given in \cite{GomesLS19}. These two problems are FPT by
vertex cover \cite{FialaGK11}.

\item \textsc{Capacitated Dominating Set} and \textsc{Capacitated Vertex Cover}
are W[1]-hard for both tree-depth and feedback vertex set \cite{DomLSV08}
(though again the result is claimed for treewidth).  

\item  \textsc{Min Maximum Out-degree} on weighted graphs is W[1]-hard by
tree-depth and feedback vertex set \cite{abs-1107-1177}.

\item \textsc{General Factors} is W[1]-hard by tree-depth and feedback vertex set \cite{SamerS11}.

\item \textsc{Target Set Selection} is W[1]-hard by tree-depth and feedback vertex set \cite{Ben-ZwiHLN11} but FPT for vertex cover \cite{NichterleinNUW13}.

\item \textsc{Bounded Degree Deletion} is W[1]-hard by tree-depth and feedback vertex set, but FPT for vertex cover \cite{BetzlerBNU12,GanianKO18}.

\item \textsc{Fair Vertex Cover} is W[1]-hard by tree-depth and feedback vertex set \cite{KnopMT19}.

\item \textsc{Fixing Corrupted Colorings} is W[1]-hard by tree-depth and feedback vertex set \cite{BiasiL19} (reduction from \textsc{Precoloring Extension}).

\item \textsc{Max Node Disjoint Paths} is W[1]-hard by tree-depth and feedback
vertex set \cite{EneMPR16,FleszarMS16}.

\item \textsc{Defective Coloring} is W[1]-hard by tree-depth and feedback vertex set \cite{BelmonteLM18}.

\item \textsc{Power Vertex Cover} is W[1]-hard by tree-depth but open for feedback vertex set \cite{AngelBEL18}.

\item \textsc{Majority CSP} is W[1]-hard parameterized by the tree-depth of the incidence graph \cite{DellKLMM17}.

\item \textsc{List Hamiltonian Path} is W[1]-hard for pathwidth \cite{MeeksS16}.

\item \textsc{L(1,1)-Coloring} is W[1]-hard for pathwidth, FPT for vertex cover \cite{FialaGK11}.

\item \textsc{Counting Linear Extensions} of a poset is W[1]-hard (under Turing reductions) for pathwidth \cite{EibenGKO19}.

\item \textsc{Equitable Connected Partition} is W[1]-hard by pathwidth and feedback vertex set, FPT by vertex cover \cite{EncisoFGKRS09}.

\item \textsc{Safe Set} is W[1]-hard parameterized by pathwidth, FPT by vertex cover \cite{BelmonteHKLOO19}.

\item \textsc{Matching with Lower Quotas} is W[1]-hard parameterized by pathwidth \cite{ArulselvanCGMM18}.

\item \textsc{Subgraph Isomorphism} is W[1]-hard parameterized by the pathwidth of $G$, even when $G, H$ are connected planar graphs of maximum degree $3$ and $H$ is a tree \cite{MarxP14}.

\item \textsc{Metric Dimension} is W[1]-hard by pathwidth \cite{abs-1907-08093}. This was recently strengthened to paraNP-hardness \cite{abs-2102-09791}, again for pathwidth.

\item \textsc{Simple Comprehensive Activity Selection} is W[1]-hard by pathwidth \cite{EibenGO18}.

\item \textsc{Defensive Stackelberg Game for IGL} is W[1]-hard by pathwidth
(reduction from \textsc{Equitable Coloring}) \cite{0001GLN18}.

\item \textsc{Directed} $(p,q)$-\textsc{Edge Dominating Set} is W[1]-hard
parameterized by pathwidth \cite{BelmonteHK0L18}.

\item \textsc{Maximum Path Coloring} is W[1]-hard for pathwidth \cite{Lampis13}.

\item Unweighted $k$-\textsc{Sparsest Cut} is W[1]-hard parameterized by the
three combined parameters tree-depth, feedback vertex set, and $k$
\cite{javadi2019parameterized}.

\item \textsc{Graph Modularity} is W[1]-hard parameterized by pathwidth plus
feedback vertex set \cite{MeeksS18}.

\item \textsc{Minimum Stable Cut} is W[1]-hard parameterized by pathwidth \cite{Lampis21}.
\end{itemize}



Let us also mention in passing that the algorithmic differences of pathwidth
and treewidth may also be studied in the context of problems which are hard for
constant treewidth. Such problems also generally remain hard for constant pathwidth (examples are
\textsc{Steiner Forest} \cite{Gassner10}, \textsc{Bandwidth}
\cite{monien1986bandwidth}, \textsc{Minimum mcut} \cite{GargVY97}). 
One could also potentially try to distinguish between pathwidth
and treewidth by considering the parameter dependence of a problem that is FPT
for both. Indeed, for a long time the best-known algorithm for
\textsc{Dominating Set} had complexity $3^k$ for pathwidth, but $4^k$ for
treewidth. Nevertheless, the advent of fast subset convolution techniques
\cite{RooijBR09}, together with tight SETH-based lower bounds
\cite{LokshtanovMS18} has, for most problems, shown that the
complexities on the two parameters coincide exactly.

Finally, let us mention a case where pathwidth and treewidth have been shown to
be quite different in a sense similar to our framework. In \cite{Razgon14}
Razgon showed that a CNF can be compiled into an OBDD (Ordered Binary Decision
Diagram) of size FPT in the pathwidth of its incidence graphs, but there exist
formulas that always need OBDDs of size XP in the treewidth. Although this
result does separate the two parameters, it is somewhat adjacent to what we are
looking for, as it does not speak about the complexity of a decision problem,
but rather shows that an OBDD-producing algorithm parameterized by treewidth
would need XP time simply because it would have to produce a huge output in
some cases.

\section{Definitions and Preliminaries}

For non-negative integers $i,j$, we use $[i,j]$ to denote the set $\{ k\ |\
i\le k\le j\}$. Note that if $j<i$, then the set $[i,j]$ is empty. We will also
write simply $[i]$ to denote the set $[1,i]$. 

We give two equivalent definitions of our main problem.

\begin{definition}\label{def:grundy} A $k$-Grundy Coloring of a graph $G=(V,E)$
is a partition of $V$ into $k$ non-empty sets $V_1,\ldots, V_k$ such that: (i)
for each $i\in[k]$ the set $V_i$ induces an independent set; (ii) for each
$i\in[k-1]$ the set $V_i$ dominates the set $\bigcup_{i<j\le k} V_j$. \end{definition}

\begin{definition} A $k$-Grundy Coloring of a graph $G=(V,E)$ is a proper
$k$-coloring $c:V\to[k]$ that results by applying the First-Fit algorithm on an
ordering of $V$; the First-Fit algorithm colors one by one the vertices in the
given ordering, assigning to a vertex the minimum color that is not already 
assigned to one of its preceding neighbors.  \end{definition}

The Grundy number of a graph $G$, denoted by $\Gamma(G)$, is the maximum $k$ 
such that $G$ admits a $k$-Grundy Coloring. 
In a given Grundy Coloring, if $u\in V_i$ (equiv.\ if $c(u) = i$) we will say that $u$ was given color
$i$. The \textsc{Grundy Coloring} problem is the problem of determining the
maximum $k$ for which a graph $G$ admits a $k$-Grundy Coloring. It is not hard
to see that a proper coloring is a Grundy coloring if and only if every vertex
assigned color $i$ has at least one neighbor assigned color $j$, for each
$j<i$.

\section{W[1]-Hardness for Treewidth}\label{sec:tw}

In this section we prove  that \textsc{Grundy Coloring} parameterized by
treewidth is W[1]-hard (Theorem~\ref{thm:tw}). Our proof relies on a reduction
from $k$-\textsc{Multi-Colored Clique} and initially establishes W[1]-hardness
for a more general 
problem where we are given a target color for a set of vertices
(Lemma~\ref{lemma:reduction-TGC}); we then reduce this to \textsc{Grundy
Coloring}.  

An interesting aspect of our reduction is that up until a rather advanced
point, the instance we construct has not only bounded treewidth (which is
necessary for the construction to work), but also bounded pathwidth (see
Lemma~\ref{lemma:smallpw}). This would seem to indicate that we are headed
towards a W[1]-hardness result for \textsc{Grundy Coloring} parameterized by
pathwidth, which would contradict the FPT algorithm of Section \ref{sec:pw}!
This is of course not the case, so it is instructive to ponder why the
reduction fails to work for pathwidth. The reason this happens is that the
final step, which reduces our instance to the plain version of \textsc{Grundy
Coloring} needs to rely on a support operation that ``pre-colors'' some of the
vertices and the gadgets we use to achieve this are trees of unbounded Grundy
number.  The results of Section~\ref{sec:pw} indicate that if these gadgets
have unbounded Grundy number, thay \emph{must} also have unbounded pathwidth,
hence there is a good combinatorial reason why our reduction only works for
treewidth.

Let us now present the different parts of our construction. We will make use of
the structure of binomial trees $T_i$.

\begin{definition} The binomial tree $T_i$ with root $r_i$ is a rooted tree
defined recursively in the following way: $T_1$ consists simply of its root
$r_1$; in order to construct $T_i$ for $i>1$, we construct one copy of $T_j$
for all $j<i$ and a special vertex $r_i$, then we connect $r_j$ with $r_i$.  
An alternative equivalent definition of the binomial tree $T_i$, $i\ge 2$ is 
that we construct two trees $T_{i-1}$ , $T_{i-1}'$, we connect their roots 
$r_{i-1}$, $r'_{i-1}$ and select one of them as the new root $r_i$. 
\end{definition}

\begin{proposition}\label{subtrees} Let $i\ge 2$, $T_i$ be a binomial tree and
$1\le t<i$. There exist $2^{i-t-1}$  binomial trees $T_t$ which are
vertex-disjoint and non-adjacent subtrees in $T_i$, where no $T_t$ contains the
root $r_i$ of $T_i$.  \end{proposition}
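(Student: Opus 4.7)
The plan is to prove the proposition by induction on $i$ (with $t$ fixed), making essential use of the \emph{second}, more symmetric, recursive definition of $T_i$, in which $T_i$ is obtained from two copies $T_{i-1}$ and $T'_{i-1}$ by adding a single edge between their roots $r_{i-1}$ and $r'_{i-1}$ and designating one of them as $r_i$. The key strengthening that makes the induction go through is already built into the statement: we always insist that none of the $2^{i-t-1}$ chosen copies of $T_t$ contains the root of the ambient binomial tree. This condition is exactly what will preserve non-adjacency when two inductive solutions are glued together.

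For the base case $i=t+1$ we have $2^{i-t-1}=1$, so we only need to exhibit a single copy of $T_t$ avoiding $r_i$. By the alternative definition, $T_{t+1}$ consists of two vertex-disjoint copies of $T_t$ joined by an edge between their roots, with one of those roots being $r_i$; picking the copy whose root is not $r_i$ clearly works.

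For the inductive step, assume the statement holds for $i-1$ and apply it separately to each of the two $T_{i-1}$ subtrees whose union (plus one connecting edge) is $T_i$. This produces $2^{(i-1)-t-1} = 2^{i-t-2}$ suitable copies of $T_t$ inside each, and taking their union yields the desired $2 \cdot 2^{i-t-2} = 2^{i-t-1}$ copies. Vertex-disjointness is immediate from the inductive hypothesis inside each $T_{i-1}$ plus the disjointness of the two $T_{i-1}$'s. The copies avoid $r_i$ because $r_i \in \{r_{i-1}, r'_{i-1}\}$ and neither of these roots lies in any chosen copy, again by induction.

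The only nontrivial point, and the reason the statement is phrased to exclude the root, is verifying non-adjacency across the two halves. Within a single $T_{i-1}$ this is given by the inductive hypothesis; between the two copies of $T_{i-1}$, the only edge of $T_i$ joining them is $r_{i-1} r'_{i-1}$, and since neither endpoint of this edge belongs to any of the chosen $T_t$ subtrees, no edge of $T_i$ connects a $T_t$-copy on one side to a $T_t$-copy on the other. This closes the induction and gives the bound $2^{i-t-1}$.
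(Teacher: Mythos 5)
Your proof is correct and takes essentially the same approach as the paper: both use the decomposition of $T_i$ into two copies of $T_{i-1}$ joined by an edge between their roots, and induct on the gap between $i$ and $t$ (the paper phrases it as induction on $i-t$, you as induction on $i$ with $t$ fixed — the same bookkeeping). Your write-up is in fact more careful than the paper's terse three-line argument, since you explicitly justify non-adjacency across the two halves and root-avoidance via the observation that neither $r_{i-1}$ nor $r'_{i-1}$ lies in any chosen copy.
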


\begin{proof}

By induction in $i-t$. For $i-t =1$, $T_i$ indeed contains one $T_{i-1}$ that
does not contain the root $r_i$.  Let it be true that $T_{i-1}$ contains
$2^{i-t-2}$ subtrees $T_t$. Then $T_i$ contains two trees $T_{i-1}$ each of
which contains $2^{i-t-2}$ $T_j$, thus $2^{i-t-1}$ in total.  
\end{proof}

\begin{proposition}\label{bincol} $\Gamma(T_i) \le i$. Furthermore,  for all
$j\le i$ there exists a Grundy coloring which assigns color $j$ to the root of
$T_i$.  \end{proposition}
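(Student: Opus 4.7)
The plan is to prove both assertions simultaneously by induction on $i$. The base case $i = 1$ is immediate: $T_1$ is a single vertex, and the unique Grundy coloring assigns color $1$ to the root, establishing both statements. In the inductive step I would use the first recursive definition, in which $T_i$ consists of $r_i$ together with the vertex-disjoint subtrees $T_1, T_2, \ldots, T_{i-1}$, with $r_i$ adjacent to each $r_k$.

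For the upper bound $\Gamma(T_i) \le i$, I would invoke the standard fact that $\Gamma(G) \le \Delta(G) + 1$ (since a vertex of color $c$ in a Grundy coloring needs $c-1$ distinctly colored neighbors) and verify by a short structural induction that $\Delta(T_i) = i - 1$. The root $r_i$ has exactly $i - 1$ children; each child $r_k$ has degree $k - 1$ inside $T_k$ plus one edge to $r_i$, giving degree $k \le i - 1$; every remaining vertex keeps the degree it had inside its subtree, which is bounded by $i - 2$ via the inductive hypothesis.

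For the existence of a Grundy coloring with $r_i$ assigned color $j$, fix $j \in [i]$ and give $r_i$ color $j$. When $j \ge 2$, I would color each subtree $T_k$ inductively with a specific color $c_k \ne j$ on $r_k$: take $c_k = k$ for $k \in [1, j-1]$ and $c_k = 1$ for $k \in [j, i-1]$ (valid since $j \ge 2$ ensures $1 \ne j$ and $1 \le k$). Then $r_i$ sees a neighbor of every color in $[1, j-1]$; each $r_k$ with $k < j$ has its Grundy condition fulfilled internally by induction; each $r_k$ with $k \ge j$ carries color $1$, for which the condition is vacuous; and the edge $r_k r_i$ only ever contributes a strictly larger color than $c_k$, so it does not break any Grundy condition.

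The main obstacle is the corner case $j = 1$, because $T_1$ has no Grundy coloring in which $r_1$ avoids color $1$, yet $r_1$ is adjacent to $r_i$ and cannot share its color. I would resolve this by observing that only the global coloring of $T_i$ needs to be Grundy, not each subtree's coloring in isolation: assign $r_1$ color $2$ directly (its unique neighbor $r_i$ carries color $1$, so the Grundy condition at $r_1$ holds globally), and for each $k \ge 2$ color $T_k$ inductively so that $r_k$ receives color $2$ (legal since $2 \le k$). Every color-$2$ vertex then has the required color-$1$ neighbor, either $r_i$ itself or an internal one supplied by the inductive coloring of $T_k$, completing the argument.
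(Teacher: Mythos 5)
Your proof is correct and follows essentially the same strategy as the paper's: bound $\Gamma$ via $\Delta + 1$, then prove the second claim by strong induction, choosing colors in $[1,k]$ for the roots $r_k$ of the subtrees so that $r_i$'s neighborhood covers $[1,j-1]$ and avoids $j$. The only cosmetic difference is the choice of helper colors: the paper keeps $r_k$ at its optimal color $k$ and only shifts $r_j$ down to $j-1$, whereas you send all $r_k$ with $k \ge j$ to color $1$; both work. Your careful treatment of $j=1$ is sound, though the paper's ``trivial'' dismissal is also justified once one thinks in First-Fit terms: placing $r_i$ first in the ordering immediately forces color $1$ on it, and any completion of the ordering yields a valid Grundy coloring.
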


\begin{proof}The first part is trivial since in any graph $G$ with maximum
degree $\Delta$ we have $\Gamma(G) \le \Delta +1$. In this case $\Gamma(T_i)\le
(i-1)+1 = i$. For the second part, we first prove that there is a Grundy
coloring which assigns color $i$ to the root. This can be proven by strong
induction: if for all $k<i$, there is a Grundy coloring which assigns color $k$
to $r_k$  for all $1\le k\le i-1$, then under this coloring, $r_i$ has at least
one neighbor receiving color $k$ for all $1\le k\le i-1$, so it has to receive
color $i$. To assign to the root a color $j<i$ we observe that if $j=1$ this is
trivial; if $j>1$, we use the fact that (by inductive hypothesis) there is a
coloring that assigns color $j-1$ to $r_{j}$, so in this coloring the root
$r_i$ will take color $j$.  \end{proof}

A Grundy coloring of $T_i$ that assigns color $i$ to $r_i$ is called \emph{optimal}. 
If $r_i$ is assigned color $j<i$ then we call the Grundy coloring \emph{sub-optimal}.

We now define a generalization of the Grundy coloring problem with target
colors and show that it is W[1]-hard parameterized by treewidth. We later
describe how to reduce this problem to \textsc{Grundy Coloring} such that the
treewidth does not increase by a lot.

\begin{definition}[\textsc{Grundy Coloring with Targets}] We are given a graph
$G(V,E)$, an integer $t \in$ \emph{I$\!$N} called the target and a subset
$S\subset V$. (For simplicity we will say that vertices of $S$ have target
$t$.) If $G$ admits a Grundy Coloring which assigns color $t$ to some vertex $s\in S$
we say that, for this coloring, vertex $s$ \emph{achieves its target}. 
If there exists a Grundy Coloring of $G$ which assigns to all vertices of $S$ color 
$t$, then we say that $G$ admits a \emph{Target-achieving Grundy Coloring}. 
\textsc{Grundy Coloring with Targets} is the decision problem associated to the 
question ``given $G, S, t$ as defined above, does $G$ admit a Target-achieving Grundy Coloring ?''.
\end{definition}

We will also make use of the following operation:

\begin{definition}[Tree-support] Given a graph $G = (V,E)$, a vertex $u\in V$
and a set $N$ of positive integers, we define the \emph{tree-support} operation
as follows: (a) for all $i\in N$ we add a copy of $T_i$ in the graph; (b) we
connect $u$ to the root $r_i$ of each of the $T_i$. We say that we \emph{add
supports} $N$ on $u$. The trees $T_i$ will be called the \emph{supporting
trees} or \emph{supports} of $u$. Slightly abusing notation, we also call
\emph{supports} the numbers $i\in N$.  \end{definition} 

Intuitively, the tree-support operation ensures that vertex $u$ may have at
least one neighbor of color $i$ for each $i\in N$ in a Grundy coloring, and
thus increase the color $u$ can take.  Observe that adding supporting trees to
a vertex does not increase the treewidth, but does increase the pathwidth (binomial trees have unbounded pathwidth). 

Our reduction is from $k$-\textsc{Multi-Colored Clique}, proven to be W[1]-hard in~\cite{FellowsHRV09}: given a
$k$-multipartite graph $G = (V_1,V_2,\ldots,V_k,E)$, decide if for every
$i\in[k]$ we can pick $u_i\in V_i$ forming a clique, where $k$ is the
parameter. We can also assume that $\forall i\in[k], |V_i| = n$, that $n$ is a
power of 2, and that $V_i=\{v_{i,0},v_{i,1},\ldots,v_{i,n-1}\}$.  Furthermore,
let $|E| = m$. We construct an instance of \textsc{Grundy Coloring with
Targets} $G'=(V',E')$ and $t = 2\log n +4$ (where all logarithms are base two)
using the following gadgets:

\begin{description}

\item[Vertex selection $S_{i,j}$.] See Figure~\ref{fig:VertexSelection}. This
gadget consists of $2\log n$ vertices $S_{i,j}^1 \cup S_{i,j}^2 =
\bigcup_{l\in[\log n]}\{s_{i,j}^{2l-1}\}\cup \bigcup_{l\in[\log
n]}\{s_{i,j}^{2l}\}$, where for each $l\in [\log n]$ we connect vertex
$s_{i,j}^{2l-1}$ to $s_{i,j}^{2l}$ thus forming a matching. Furthermore, for
each $l\in [2,\log n]$, we add supports $[2l -2]$ to vertices $s_{i,j}^{2l-1}$
and $s_{i,j}^{2l}$. Observe that the vertices $s_{i,j}^{2l-1}$ and
$s_{i,j}^{2l}$ together with their supports form a binomial tree $T_{2l}$ 
with either of these vertices as the root.  
We construct $k(m+2)$ gadgets $S_{i,j}$, one for each $i\in[k], j\in[0,m+1]$.

\begin{figure}
\begin{subfigure}[b]{.44\linewidth}
\includegraphics[scale=0.33]{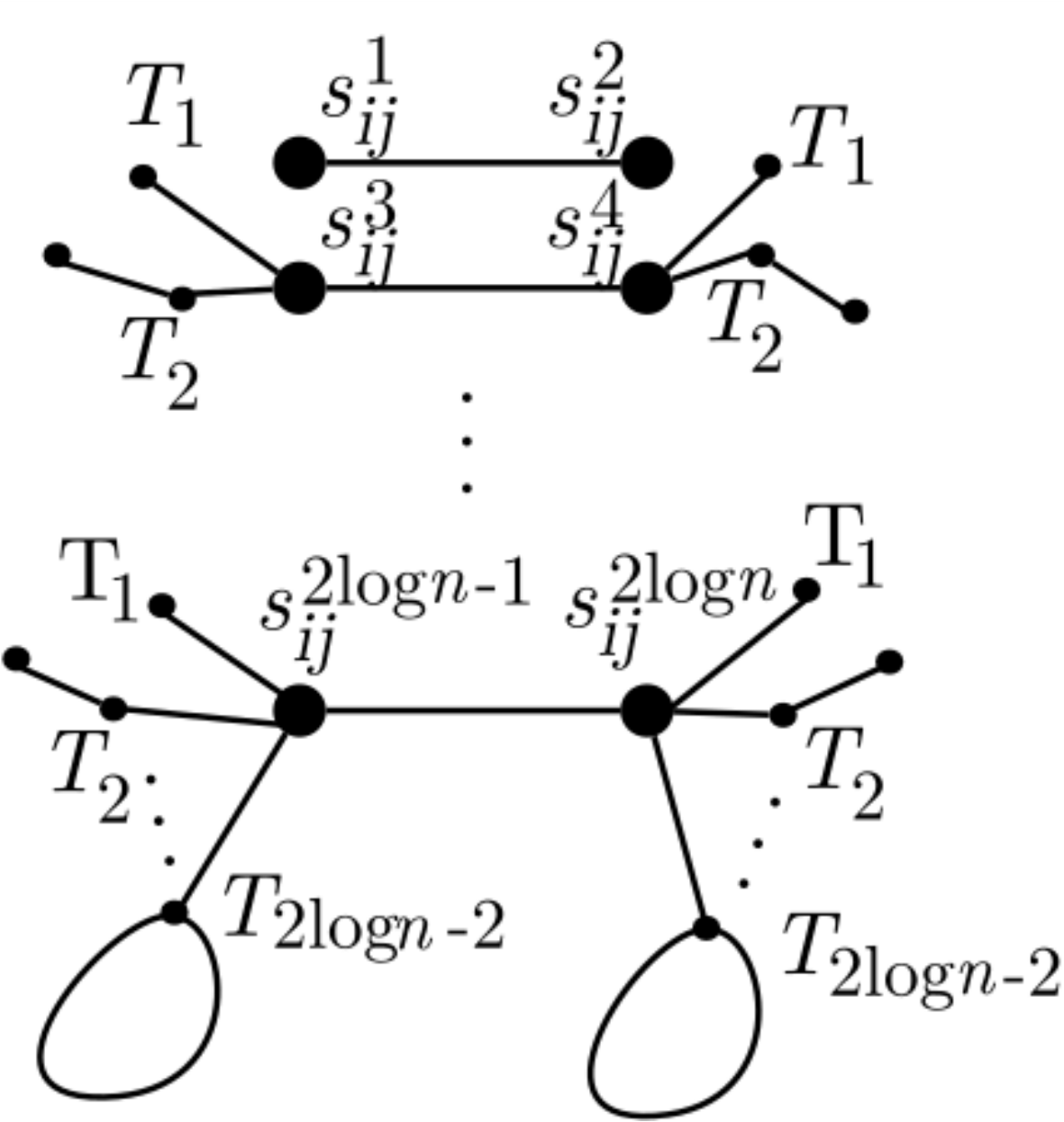}
\caption{Vertex Selection gadget $S_{i,j}$.}
\label{fig:VertexSelection}
\end{subfigure}%
\begin{subfigure}[b]{.56\linewidth}
\includegraphics[scale=0.33]{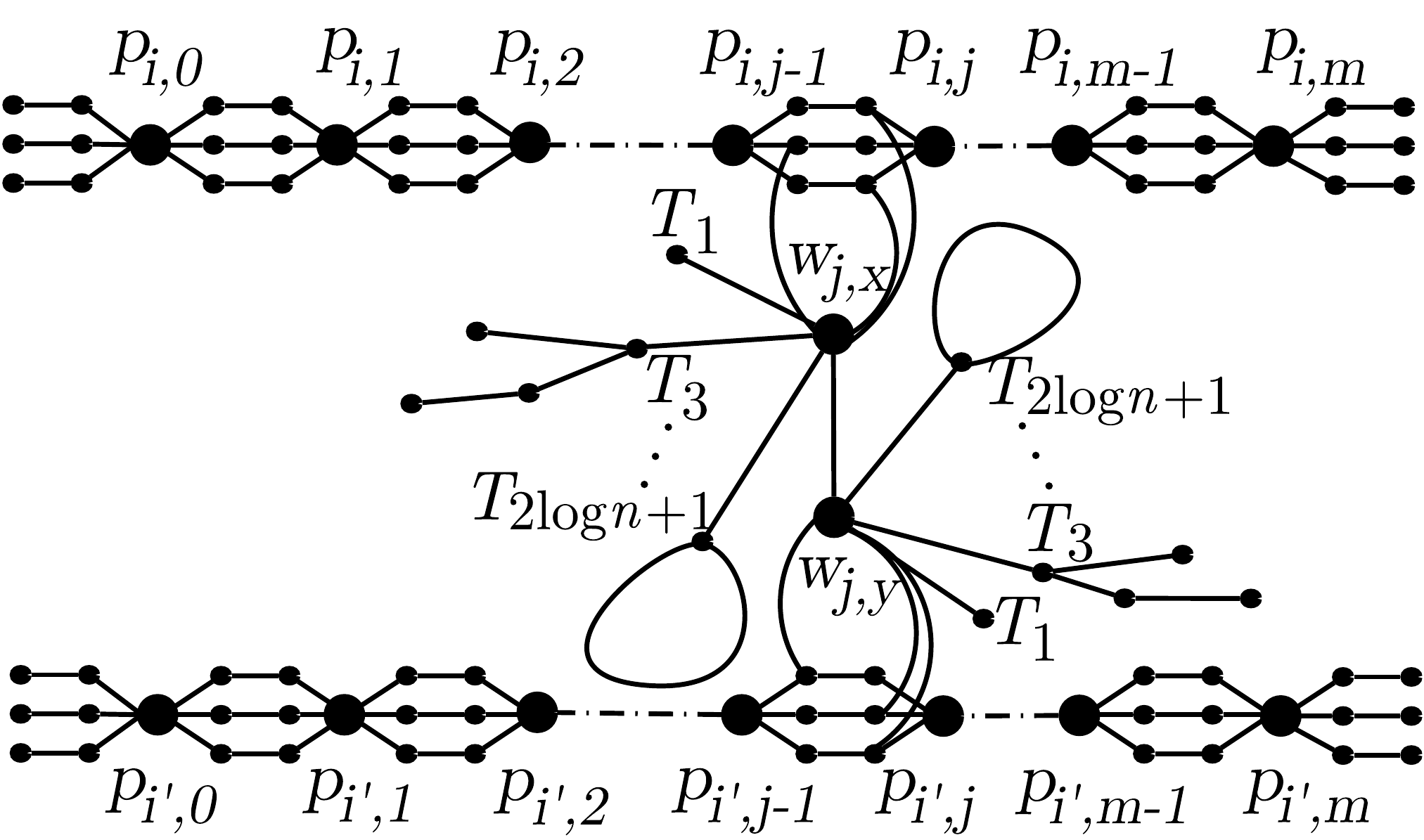} \caption{Propagators $p_{i,j}$ and
Edge Selection gadget $W_{j}$. The edge selection checkers and the supports of the $p_{i,j}$ and $s_{i,j}^l$ are not depicted. In the example $B_x = 010$ and $B_y = 100$.}
\label{fig:EdgeSelection} \end{subfigure} \caption{The gadgets.  Figure\
\ref{fig:VertexSelection} is an enlargement of Figure\ \ref{fig:EdgeSelection}
between $p_{i,j-1}$ and $p_{i,j}$.}\label{fig:Gadgets} \end{figure}

\smallskip

The vertex selection gadget $S_{i,1}$ encodes in binary the vertex that is
selected in the clique from $V_i$. In particular, for each pair
$s_{i,1}^{2l-1},s_{i,1}^{2l}$, $l\in[\log n]$ either of these vertices can take
the maximum color in an optimal Grundy coloring of the binomial tree $T_{2l}$
(that is, a coloring that gives the root of the binomial tree $T_{2l}$ color
$2l$).  A selection corresponds to bit 0 or 1 for the $l^{th}$ binary position.
In order to ensure that for each $j\in [m]$ all (middle) $S_{i,j}$ encode the
same vertex, we use propagators. 

\item[Propagators $p_{i,j}$.] See Figure~\ref{fig:EdgeSelection}. For $i\in[k]$
and $j\in [0,m]$, a propagator $p_{i,j}$ is a single vertex connected to all
vertices of $S_{i,j}^2 \cup S_{i,j+1}^1$. To each $p_{i,j}$, we also add
supports $\{2\log n+1,2\log n+2, 2\log n+3\}$.  The propagators have target
$t=2\log n +4$.

\item[Edge selection $W_j$.] See Figure~\ref{fig:EdgeSelection}. Let $j =
(v_{i,x},v_{i',y})\in E$, where $v_{i,x}\in V_i$ and $v_{i',y} \in V_{i'}$. The
gadget $W_j$ consists of four vertices $w_{j,x}, w_{j,y}, w'_{j,x}, w'_{j,y}$.
We call $w'_{j,x}, w'_{j,y}$ the \emph{edge selection checkers}. We have the
edges $(w_{j,x}$, $w_{j,y}), (w'_{j,x},w_{j,x}), (w'_{j,y},w_{j,y})$.  Let us
now describe the connections of these vertices with the rest of the graph.  Let
$B_x = b_1 b_2 \ldots b_{\log n}$ be the binary representation of $x$. We
connect $w_{j,x}$  to each vertex $s_{ij}^{2l-b_l}$, $l\in [\log n]$ (we do
similarly for $w_{j,y}, S_{i',j}$, and $B_y$).  We add to each of $w_{j,x},
w_{j,y}$ supports $ \bigcup_{l\in [\log n+1]} \{2l-1\}$. We add to each of
$w'_{j,x},w'_{j,y}$ supports $[2\log n+3]\setminus\{2\log n+1\}$ and set the
target $t=2\log n+4$ for these two vertices.  We construct $m$ such gadgets,
one for each edge.  We say that $W_j$ \emph{is activated} if at least one of
$w_{j,x},w_{j,y}$ receives color $2\log n +3$.

\item[Edge validators $q_{i,i'}$.] We construct $k\choose 2$ of these gadgets,
one for each pair $(i,i'), i<i' \in [k]$. The edge validator is a single vertex
that is connected to all vertices $w_{j,x}$ for which $j$ is an edge between
$V_i$ and $V_{i'}$. We add supports $[2\log n+2]$ and a target of $t=2\log n
+4$.

\smallskip

The edge validator plays the role of an ``or'' gadget: in order for it to
achieve its target, at least one of its neighboring edge selection gadgets
should be activated.  
\end{description}

\begin{lemma}\label{lemma:reduction-TGC} $G$ has a clique of size $k$ if and only if $G'$ has a target-achieving Grundy coloring.
\end{lemma}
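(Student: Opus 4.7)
The plan is to prove the biconditional in the standard two directions, using the structural properties of the binomial-tree supports throughout.

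For the forward direction, assume $G$ has a $k$-clique $\{v_{1,x_1},\ldots,v_{k,x_k}\}$. Inside each $S_{i,j}$ and for each $l\in[\log n]$, I color the binomial tree $T_{2l}$ that contains $s_{i,j}^{2l-1}$ and $s_{i,j}^{2l}$ so that $s_{i,j}^{2l-b_l}$ receives color $2l$ and its partner receives $2l-1$, where $b_l$ is the $l$-th bit of $B_{x_i}$; this is possible by Proposition~\ref{bincol}. Each propagator $p_{i,j}$ then sees both colors $\{2l-1,2l\}$ inside $\{s_{i,j}^{2l},s_{i,j+1}^{2l-1}\}$ for every $l$ and, combined with its three optimally colored supports, reaches color $2\log n+4$. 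On the unique edge $j^{\star}=(v_{i,x_i},v_{i',x_{i'}})\in E(V_i,V_{i'})$ I assign $w_{j^{\star},x_i}$ the color $2\log n+3$ and $w_{j^{\star},x_{i'}}$ the color $2\log n+2$; on every other edge $j\in E(V_i,V_{i'})$ I assign one of $\{w_{j,x},w_{j,y}\}$ the color $2\log n+2$ and the other the color $2\log n+1$. When the bit pattern of $x$ (or of $y$) disagrees with the selected one, I supply the missing even colors to the corresponding $w_{j,\cdot}$ by sub-optimally recoloring some of its odd-indexed supports: a higher support $T_{2l+1}$ provides $2l$ in place of $2l+1$, and the loss cascades through the higher supports up to $T_{2\log n+1}$; this rerouting is feasible for every target color up to $2\log n+2$, thanks either to the slack support $T_{2\log n+1}$ or to the fact that the partner $w_{j,z'}$ itself furnishes color $2\log n+1$. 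Each checker $w'_{j,\cdot}$ then reaches $2\log n+4$ via one of three complementary patterns depending on whether $w_{j,\cdot}$ has color $2\log n+1$, $2\log n+2$, or $2\log n+3$ (keep its own supports optimal, sub-optimally recolor $T_{2\log n+2}$ to produce $2\log n+1$, or sub-optimally recolor $T_{2\log n+3}$ to produce $2\log n+1$, respectively). Finally, each edge validator $q_{i,i'}$ is satisfied because $w_{j^{\star},x_i}$ supplies the only color missing from its own supports, namely $2\log n+3$.

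For the backward direction, suppose $G'$ admits a target-achieving Grundy coloring. Since $p_{i,j}$ has color $2\log n+4$ and its three supports contribute at most the colors $\{2\log n+1,2\log n+2,2\log n+3\}$, every color in $\{1,\ldots,2\log n\}$ must come from the $2\log n$ vertices of $S_{i,j}^2\cup S_{i,j+1}^1$. The bound $\Gamma(T_{2l})\le 2l$ from Proposition~\ref{bincol} then forces, via a top-down pigeonhole on which vertex can produce each color, the pair $(s_{i,j}^{2l},s_{i,j+1}^{2l-1})$ to carry exactly $\{2l-1,2l\}$ for every $l$. A short case analysis on which vertex of the pair gets $2l$ shows that $S_{i,j}$ and $S_{i,j+1}$ must encode the same bit at position $l$, so by induction on $j$ each $S_{i,\cdot}$ encodes a single vertex, which I call $v_{i,x_i}$. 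Next, the edge validator $q_{i,i'}$ requires some neighbor $w_{j,z}$ with color $2\log n+3$. I trace the consequences: $w_{j,z}$ at color $2\log n+3$ needs its missing color $2\log n+2$ from the adjacent $w_{j,z'}$; $w_{j,z'}$ at color $2\log n+2$ then needs color $2\log n+1$, and its only possible source is its own support $T_{2\log n+1}$ colored optimally. This leaves the remaining $\log n$ odd-indexed supports and the $\log n$ $s$-neighbors of $w_{j,z'}$ to form a tight bipartite matching covering $\{1,\ldots,2\log n\}$, and a cascading Hall-type argument shows that this matching exists only when $z'$'s binary representation agrees with $B_{x_{i'}}$ in every coordinate, forcing $z'=x_{i'}$. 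The symmetric argument yields $z=x_i$, so $j=(v_{i,x_i},v_{i',x_{i'}})\in E$ for every pair $(i,i')$, and $\{v_{1,x_1},\ldots,v_{k,x_k}\}$ is the desired $k$-clique.

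The main technical obstacle is the tight matching analysis on the $w_{j,\cdot}$-gadget: the number of odd-indexed supports, the $s$-neighbors and the slack offered by sub-optimal recoloring are calibrated precisely so that color $2\log n+1$ can always be reached on $w_{j,\cdot}$ (thanks to one available slack support, which lets every checker succeed on non-clique edges), but once the support $T_{2\log n+1}$ is committed to provide color $2\log n+1$ no slack remains, and any mismatch between $z'$ and $x_{i'}$ breaks the cascading matching. Verifying both halves of this rigidity—the feasibility of the support rerouting in the forward construction and its impossibility in the backward direction—is the crux of the argument.
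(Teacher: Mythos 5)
Your overall architecture matches the paper: the forward direction builds a target-achieving coloring from a clique, and the backward direction extracts the clique via (i) consistency of the vertex-selection gadgets across columns, (ii) activation of one edge gadget per pair $(i,i')$, and (iii) forcing the activated edge to match the selected vertices. The forward construction is essentially the paper's (the paper assigns to each support $T_r$ the largest available color $\le r$ rather than describing a cascade, but the outcome --- $w_{j,\cdot}$ reaches at least $2\log n+1$ on non-clique edges, and the checker is satisfied by sub-optimally coloring the one support whose index equals $w_{j,\cdot}$'s color --- is the same). Claims (ii) and (iii) in your backward direction are also in line with the paper's Claims~2 and~3.

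There is, however, a genuine gap in your backward argument for vertex-selection consistency. You write that the bound $\Gamma(T_{2l})\le 2l$ ``forces, via a top-down pigeonhole,'' the pair $s_{i,j}^{2l},s_{i,j+1}^{2l-1}$ to carry exactly $\{2l-1,2l\}$. But $\Gamma(T_{2l})\le 2l$ is a bound for the binomial tree \emph{in isolation}; in $G'$ the vertex $s_{i,j}^{2l-b_l}$ additionally neighbours an edge-selection vertex $w_{j,\cdot}$, whose color is not a priori constrained. If some $w_{j,\cdot}$ received a small color, it would hand an extra low color to its $s$-neighbour, letting that $s$-vertex reach color $2l+1$ and invalidating the pigeonhole (and, for the same reason, the $s$-vertices connected to $p_{i,j}$ need not cover exactly $[2\log n]$). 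The paper closes this precisely by beginning with the edge-selection checkers: since $w'_{j,x}$ has degree $2\log n+3$, must see all of $[2\log n+3]$, and its supports omit $T_{2\log n+1}$, the checker's target forces $w_{j,x}$ to have color at least $2\log n+1$. Only once all $w_{j,\cdot}$ are known to be high-colored do the $s$-vertices have effective low-degree $2l-1$, after which the top-down pigeonhole you sketch is sound. You invoke the checker/validator targets only later, in the clique-extraction step, so this first step is missing and the consistency claim as stated does not follow.
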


\begin{proof} $\Rightarrow$) Suppose that $G$ has a clique and we want to
produce a coloring of $G'$. In the remainder, when we say that we color a
support tree ``optimally'', we mean that we color its internal vertices in a
way that gives the root the maximum possible color. 

We color the vertices of $G'$ in the following order: First, we color the
vertex selection gadget $S_{i,j}$.  We start from the supports which we color
optimally.  We then color the matchings as follows: let $v_{i,x}$ be the vertex
that was selected in the clique from $V_i$ and $b_1 b_2 \ldots b_{\log n}$ be
the binary representation of $x$; we color vertices $s_{i,j}^{2l-(1-b_l)}$,
$l\in[\log n]$ with color $2l-1$ and vertices $s_{i,j}^{2l-b_l}$, $l\in[\log
n]$ will receive color $2l$.  For the propagators, we color their supports
optimally.  Propagators have $2\log n +3$ neighbors each, all with different
colors, so they receive color $2\log n +4$, thus achieving the targets. 

Then, we color the edge validators $q_{i,i'}$ and the edge selection gadgets
$W_j$ that correspond to edges of the clique (that is, $j = (v_{i,x},
v_{i',y})\in E$ and $v_{i,x}\in V_i$, $v_{i',y}\in V_{i'}$ are selected in the
clique). We first color the supports of $q_{i,i'}, w_{j,x}, w_{j,y}$ optimally.
From the construction, vertex $w_{j,x}$ is connected with vertices
$s_{i,j}^{2l-b_l}$ which have already been colored $2l$, $l\in [\log n]$ and
with supports $\bigcup_{l\in[\log n+1]}\{2l-1\}$, thus $w_{j,x}$ will receive
color $2\log n +2$. Similarly $w_{j,y}$ already has neighbors which are colored
$[2\log n+1]$, but also $w_{j,x}$, thus it will receive color $2\log n +3$.
These $W_j$ will be activated. Since both $w_{j,x}, w_{j,y}$ connect to
$q_{i,i'}$, the latter will be assigned color $2\log n +4$, thus achieving its
target. As for $w'_{j,x}$ and $w'_{j,y}$, these vertices have one neighbor
colored $c$, where $c=2\log n+2$ or $c=2\log n+3$.  We color their support
$T_c$ sub-optimally so that the root receives color $2\log n+1$; we color their
remaining supports optimally. This way, vertices $w'_{j,x}, w'_{j,y}$ can be
assigned color $t=2\log n+4$, achieving the target.

Finally, for the remaining $W_j$, we claim that we can assign to both $w_{j,x},
w_{j,y}$ a color that is at least as high as $2\log n+1$. Indeed, we assign to
each supporting tree $T_r$ of $w_{j,x}$ a coloring that gives its root the
maximum color that is $\le r$ and does not appear in any neighbor of $w_{j,x}$
in the vertex selection gadget. We claim that in this case $w_{j,x}$ will have
neighbors with all colors in $[2\log n]$, because in every interval $[2l-1,2l]$
for $l\in[\log n]$, $w_{j,x}$ has a neighbor with a color in that interval and
a support tree $T_{2l+1}$. If $w_{j,x}$ has color $2\log n+1$ then we color the
supports of $w'_{j,x}$ optimally and achieve its target, while if $w_{j,x}$ has
color higher than $2\log n+1$, we achieve the target of $w'_{j,x}$ as in the
previous paragraph.

\medskip

$\Leftarrow$) Suppose that $G'$ admits a coloring that achieves the target for
all propagators, edge selection checkers, and edge validators.  We will prove
the following three claims, which together imply the remaining direction of the
lemma: 

\begin{claim}\label{claim:vertex-selection} The coloring of the vertex
selection gadgets is consistent throughout, that is, for each $i\in[k]$ and for
each $j_1,j_2, l$, we have that $s_{i,j_1}^l, s_{i,j_2}^l$ received the same
color.  This coloring corresponds to a selection of $k$ vertices of $G$.
\end{claim}

\begin{claim}\label{claim:edge-selection} $k\choose 2$ edge selection gadgets
have been activated.  They correspond to $k\choose 2$ edges of $G$ being
selected.  \end{claim}

\begin{claim}\label{claim:clique}
If an edge selection gadget $W_j = \{w_{j,x},
w_{j,y}\}$ with $j = (v_{i,x}, v_{i',y})$ has been activated then the coloring
of the vertex selection gadgets $S_{i,j}$ and $S_{i',j}$ corresponds to the
selection of vertices $v_{i,x}$ and $v_{i',y}$. In other words, selected
vertices and edges form a clique of size $k$ in $G$.  \end{claim}

\begin{proof}[Proof of Claim~\ref{claim:vertex-selection}]
Suppose that an edge selection checker $w'_{j,x}$ achieved its target. We
claim that this implies that $w_{j,x}$ has color at least $2\log n+1$. Indeed,
$w'_{j,x}$ has degree $2\log n+3$, so its neighbors must have all distinct
colors in $[2\log n+3]$, but among the supports there are only $2$ neighbors
which may have colors in $[2\log n+1,2\log n+3]$. Therefore, the missing color
must come from $w_{j,x}$.  We now observe that vertices from the vertex
selection gadgets have color at most $2\log n$, because if we exclude from
their neighbors the vertices $w_{j,x}$ (which we argued have color at least
$2\log n+1$) and the propagators (which have target $2\log n+4$), these
vertices have degree at most $2\log n-1$.

Suppose that a propagator $p_{i,j}$ achieves its target of $2 \log n +4$.
Since this vertex has a degree of $2 \log n +3$, that means that all of its
neighbors should receive all the colors in $[2 \log n +3]$. As argued, colors
$[2\log n+1, 2\log n+3]$ must come from the supports. Therefore, the colors
$[2\log n]$ come from the neighbors of $p_{i,j}$ in the vertex selection
gadgets.

We now note that, because of the degrees of vertices in vertex selection
gadgets, only vertices $s_{i,j}^{2\log n},s_{i,j+1}^{2\log n-1}$ can receive
colors $2\log n, 2\log n -1$; from the rest, only $s_{i,j}^{2\log
n-2},s_{i,j+1}^{2\log n-3}$ can receive colors $2\log n-2, 2\log n-3$ etc.
Thus, for each $l\in [\log n]$, if $s_{i,j}^{2l}$ receives color $2l-1$ then
$s_{i,j+1}^{2l-1}$ should receive color $2l$ and vice versa. With similar
reasoning, in all vertex selection gadgets we have that $s_{i,j}^{2l-1},
s_{i,j}^{2l}$ received the two colors
$\{2l-1,2l\}$ since they are neighbors. As a result, the colors of $s_{i,j+1}^{2l-1}$, $s_{i,j}^{2l-1}$ (and thus the colors of $s_{i,j+1}^{2l}$,  $s_{i,j}^{2l}$) are the same, therefore, the coloring is consistent, for all values of
$j\in[m]$. \end{proof}

\begin{proof}[Proof of Claim~\ref{claim:edge-selection}]
If an edge validator achieves its target of $2\log n +4$, then at least one of
its neighbors from an edge selection gadget has received color $2\log n +3$. We
know that each edge selection gadget only connects to a unique edge validator, so
there should be $k \choose 2$ edge selection gadgets which have been activated
in order for all edge validators to achieve the target. 
\end{proof}

\begin{proof}[Proof of Claim~\ref{claim:clique}]
Suppose that an edge validator $q_{i,i'}$ achieves its target. That means that
there exists an edge selection gadget $W_j = \{w_{j,x}, w_{j,y}, w'_{j,x},
w'_{j,y}\}$ for which at least one of its vertices $\{w_{j,x}, w_{j,y}\}$, say vertex $w_{j,x}$, has received
color $2\log n +3$.  Let $j$ be an edge connecting $v_{i,x}\in V_i$ to
$v_{i',y}\in V_{i'}$.  Since the degree of $w_{j,x}$ is $2 \log n +4$ and we
have already assumed that two of its neighbors ($q_{i,i'}$ and $w'_{j,x}$) have color $2\log n +4$, in order
for it to receive color $2\log n +3$ all its other neighbors should receive all
colors in $[2\log n +2]$. The only possible assignment is to give colors $2l$,
$l\in [\log n]$ to its neighbors from $S_{i,j}$ and color $2\log n +2$ to
$w_{j,y}$.  The latter is, in turn, only possible if the neighbors of $w_{j,y}$
from $S_{i',j}$ receive all colors $2l$, $l\in [\log n]$. The above corresponds
to selecting vertex $v_{i,x}$ from $V_i$ and $v_{i',y}$ from $V_{i'}$. 
\end{proof} \end{proof}

\begin{lemma}\label{lemma:smallpw} Let $G''$ be the graph that results from
$G'$ if we remove all the tree-supports. Then $G''$ has pathwidth at most
${k\choose 2} + 2k + 3$.  \end{lemma}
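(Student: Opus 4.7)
The plan is to construct an explicit path decomposition of $G''$ by sweeping through the ``positions'' $j = 1, 2, \ldots, m$ in order. First I will identify the clean structure of $G''$: it consists of $k$ ``row chains'' (one per color class), where row $i$ is the path $p_{i,0}, S_{i,1}, p_{i,1}, S_{i,2}, \ldots, p_{i,m}, S_{i,m+1}$; constant-sized edge-selection gadgets $W_j$, each attaching at position $j$ to two specific row chains; and ${k \choose 2}$ edge validators $q_{i,i'}$, each incident to many $w_{j,x}$-vertices spread across different positions. Since the validators span the entire sweep, the simplest plan is to place all ${k \choose 2}$ of them in every bag.

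Next I will describe the sweep. Between consecutive positions $j$ and $j+1$, the transition bag contains the ${k \choose 2}$ validators plus the $k$ ``current'' propagators $\{p_{i,j} : i \in [k]\}$. Within each position $j$, I process the $k$ rows sequentially, ordering them so that the two rows $i, i'$ incident to edge $j$ are consecutive, which allows me to ``pass the baton'' from $w_{j,x}$ to $w_{j,y}$ between them. For each row-$i$ processing I first insert $p_{i,j}$ so that $p_{i,j-1}$ and $p_{i,j}$ coexist; then I sweep through the $\log n$ matching pairs of $S_{i,j}$ one at a time by introducing $s_{i,j}^{2l-1}, s_{i,j}^{2l}$, which covers the matching edge within the pair as well as both propagator edges $p_{i,j-1}\text{--}s_{i,j}^{2l-1}$ and $p_{i,j}\text{--}s_{i,j}^{2l}$; finally I retire $p_{i,j-1}$. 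If $i$ is an endpoint of edge $j$, I keep the corresponding vertex $w_{j,x}$ (or $w_{j,y}$) in the bag throughout the entire row-$i$ sweep so that all its $\log n$ edges into $S_{i,j}$ are covered; the pendant $w'_{j,x}$ and the internal edge $w_{j,x}\text{--}w_{j,y}$ are handled by transiently including the needed vertex alongside its partner, which is possible precisely because the two endpoint rows are scheduled consecutively.

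For the bag-size analysis, the maximum occurs in the middle of the sweep of an endpoint row of some edge $j$ and contains: the ${k \choose 2}$ validators; $k-1$ propagators from the other rows ($p_{i'',j}$ for already-processed rows and $p_{i'',j-1}$ for not-yet-processed ones); both $p_{i,j-1}$ and $p_{i,j}$ for the current row; the two matching vertices $s_{i,j}^{2l-1}, s_{i,j}^{2l}$ of the current pair; and at most two extra $W_j$-vertices transiently. This totals ${k \choose 2} + (k-1) + 2 + 2 + 2 = {k \choose 2} + k + 5$ vertices, giving pathwidth at most ${k \choose 2} + k + 4$, which is at most ${k \choose 2} + 2k + 3$ for $k \geq 1$ as required. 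The only real subtlety of the proof is verifying that every edge of $G''$ is covered in some bag---in particular the $\log n$ edges from each $W_j$-vertex into the pairs of the relevant $S_{i,j}$, the validator--$W_j$ edges, and the three internal edges of $W_j$---which is guaranteed by the invariant that the $W_j$-vertex remains in the bag throughout the whole $S_{i,j}$ sweep and by the consecutive scheduling of the two endpoint rows.
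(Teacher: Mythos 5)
Your proof is correct and follows essentially the same approach as the paper: a left-to-right sweep over the positions $j$, keeping all $\binom{k}{2}$ edge validators permanently present while shifting propagators and the constant-size $W_j$ gadget along the way. The paper phrases this as a node-searching strategy and keeps all $2k$ propagators $p_{i,j-1},p_{i,j}$ simultaneously (giving $\binom{k}{2}+2k+3$), whereas you process rows one at a time and hold only $k+1$ propagators, which in fact yields the slightly sharper bound $\binom{k}{2}+k+4$ (comfortably within the claimed bound); the only cosmetic slips are omitting $S_{i,0}$ from the row chain (harmless, as it only attaches to $p_{i,0}$) and the stray phrase ``color class'' for the vertex classes $V_i$.
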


\begin{proof} We will use the equivalent definition of pathwidth as a
node-searching game, where the robber is eager and invisible and the cops are
placed on nodes \cite{Bodlaender98}. We will use ${k\choose 2} + 2k + 4$ cops
to clean $G''$ as follows: We place $k\choose 2$ cops on the edge validators.
Then, starting from $j=0$, we place $2k$ cops on the propagators $p_{i,0},
p_{i,1}$ for $i=1,\ldots, k$, plus 2 cops on the edge selection vertices
$w_{j,x}$, $w_{j,y}$ that correspond to edge $j$. We use the two additional
cops to clean line by line the gadgets $S_{i,j}$. We then use one of these cops
to clear $w'_{j,x}, w'_{j,y}$. We continue then to the next column $j=2$ by
removing the $k$ cops from the propagators $p_{i,1}$ and placing them to
$p_{i,3}$. We continue for $j=3, \ldots m-1$ until the whole graph has been
cleaned.  \end{proof}

We will now show how to implement the targets using the tree-filling operation
defined below. 

\begin{definition}[Tree-filling] Let $G=(V,E)$ be a graph. Suppose that $S =
\{s_1, s_2, \ldots, s_j\}\subset V$ is a set of vertices with target $t$.  The
\emph{tree-filling} operation is the following. First, we add in $G$ a binomial
tree $T_i$, where $i = \lceil\log j\rceil + t +1$. Observe that, by
Proposition~\ref{subtrees}, there exist $2^{i-t-1} > j$ vertex-disjoint and
non-adjacent sub-trees $T_t$ in $T_i$.  For each $s\in S$, we find such a copy
of $T_t$ in $T_i$, identify $s$ with its root $r_t$, and delete all other
vertices of the sub-tree $T_t$.  \end{definition}

The tree-filling operation might in general increase treewidth, but we will do
it in a way such that treewidth only increases by a constant factor compared to
the pathwidth of $G$.

\begin{lemma}\label{lemma:pwtotw} Let $G=(V,E)$ be a graph of pathwidth $w$ and
$S=\{s_1,\ldots,s_j\}\subset V$ a subset of vertices having target $t$.  Then
there is a way to apply the tree-filling operation such that the resulting
graph $H$ has $tw(H) \le 4w +5$.  \end{lemma}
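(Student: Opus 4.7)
The plan is to choose the assignment of the $s_\ell \in S$ to copies of $T_t$ inside $T_i$ so that the pruned tree $T'$ is compatible with a fixed path decomposition of $G$, then construct an explicit tree decomposition of $H$ of the required width.

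First, I fix a nice path decomposition $P = B_1, \ldots, B_r$ of $G$ of width $w$, and relabel $S$ so that $s_1, \ldots, s_j$ appear in the order of first introduction in $P$. Applying the recursive (second) definition of $T_i$ unfolded $i-t-1$ times exhibits $T_i$ via a balanced binary hierarchy $\mathcal{H}$ of depth $i-t-1$, whose $2^{i-t-1}$ leaves are in natural bijection with the disjoint $T_t$-copies inside $T_i$ given by Proposition~\ref{subtrees}. I perform the tree-filling by identifying $s_\ell$ with the root of the $\ell$-th leaf of $\mathcal{H}$ in DFS order. The point of this choice is an \emph{interval property}: for every node $\sigma \in \mathcal{H}$, the set of selected $s_\ell$'s lying in $\sigma$'s subhierarchy is a contiguous interval $\{s_{a_\sigma}, \ldots, s_{b_\sigma}\}$ in our order. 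Let $[p_\sigma, q_\sigma]$ denote the shortest interval of $P$-bag indices whose union contains all these $s_\ell$'s.

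The tree decomposition $\mathcal{D}$ of $H$ will have the path $P$ as its backbone, with a tree of auxiliary bags $\{\beta_\sigma\}$ --- one per internal $\sigma \in \mathcal{H}$ whose subhierarchy contains at least one selected leaf --- grafted onto the backbone so as to mirror the pruned $\mathcal{H}$. Each $\beta_\sigma$ contains a constant number of key vertices of $T_i$ (the root $r_\sigma$ of the $T_i$-subtree at $\sigma$ and the two roots $r_{\sigma_0}, r_{\sigma_1}$ of $\sigma$'s children in $\mathcal{H}$, which together cover the single $T_i$-edge between the two $T_{i-1}$-halves), together with the at most $2(w+1)$ boundary vertices in $B_{p_\sigma} \cup B_{q_\sigma}$. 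The backbone bags are $\bar B_i := B_i$ augmented with the ``straddling'' hierarchy roots, i.e., those $r_\sigma$ with $p_\sigma < i < q_\sigma$ that must be present to preserve the connected-subtree property across $\mathcal{D}$. The pendant $T_t$-subtrees of $T'$ coming from unselected leaves of $\mathcal{H}$ are trees attached at a single vertex each, and can be grafted onto $\mathcal{D}$ via standard width-$1$ decompositions without affecting the overall width.

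The main obstacle is controlling the number of straddling roots in each $\bar B_i$, since only then will the bag sizes be $O(w)$. For $r_\sigma$ to straddle $\bar B_i$, the interval property forces $\sigma$ to contain some $s_\ell\in S$ that is alive at bag $B_i$, of which there are at most $w+1$ by the pathwidth bound. A careful look at the balanced hierarchy then shows that along each such $s_\ell$'s root-to-leaf path in $\mathcal{H}$, only boundedly many ancestor-$\sigma$'s contribute a distinct $r_\sigma$ to $\bar B_i$, because the roots $r_\sigma, r_{\sigma_0}, r_{\sigma_{00}}, \ldots$ all coincide along chains of left-child descents and only the ``switch'' to a right-child introduces a new root. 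Combining this $O(w)$ count for the backbone bags with the constant number of key vertices and the $2(w+1)$ boundary vertices in each $\beta_\sigma$, the largest bag has size at most $4w+5$, giving the desired bound on $\mathrm{tw}(H)$.
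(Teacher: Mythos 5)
Your approach is genuinely different from the paper's (the paper builds the tree decomposition by a top-down recursion that mirrors the binary split of $S$, maintaining the invariant that one bag contains the leftmost original bag, the rightmost original bag, and the root of the current binomial tree), but there is a real gap in your argument at the step that bounds the backbone bag sizes.

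Specifically, the claim that each $\bar B_i$ receives only $O(w)$ straddling roots does not hold. You observe correctly that $r_\sigma = r_{\sigma_0}$ along left-child chains, so only right-child switches introduce a new root, but you then assert that ``only boundedly many'' such switches occur along a root-to-leaf path of $\mathcal{H}$. In a balanced hierarchy of depth $\lceil \log j\rceil$, a root-to-leaf path can have up to $\Theta(\log j)$ right-child steps, hence $\Theta(\log j)$ distinct roots along the chain of ancestors of a single alive $s_\ell$. A concrete failure: take $G$ to be a path $v_1-\cdots-v_n$ (pathwidth $w=1$) and $S=V(G)$, so $j=n$ and the leaves of $\mathcal{H}$ are identified with $v_1,\ldots,v_n$ in order. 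Look at the bag of $P$ containing $\{v_{n-2},v_{n-1}\}$. For every ancestor $\sigma_d$ (at depth $d\ge 2$ above the leaf of $v_n$), the interval $[p_{\sigma_d},q_{\sigma_d}]$ strictly contains this index, so $r_{\sigma_d}$ straddles; and because $v_n$'s leaf is reached by an all-right-child path, the roots $r_{\sigma_2},r_{\sigma_3},\ldots$ are $\Theta(\log n)$ pairwise distinct vertices of $T_i$. So $|\bar B_i|=\Omega(\log n)$, which is not $O(w)$ and certainly not $\le 4w+6$. Intuitively, the ``long edges'' of $T_i$ connecting the left and right halves of each $\sigma$ form a nested interval system over $P$, and the nesting depth at a bag is $\Theta(\log j)$, not $O(w)$; a path-shaped backbone cannot absorb these crossings in constant-size bags.

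The paper sidesteps this entirely by never committing to the path $P$ as the global backbone. Instead it recursively builds decompositions $\mathcal{T}^L$ and $\mathcal{T}^R$ for the two halves, each exposing one interface bag (containing the two endpoint bags of its sub-path plus the local root), and glues the two via a single new bag $R' = A \cup A^R \cup Z^L \cup Z \cup \{r_{i-1},r_i\}$ of size at most $4(w+1)+2$. The resulting decomposition is a balanced tree rather than a path with pendants, so nothing needs to straddle a long stretch. If you want to salvage your explicit construction, you would need to make the decomposition's shape follow $\mathcal{H}$ rather than $P$, at which point it converges to the paper's recursive argument.
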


\begin{proof}

\textbf{Construction of $H$.} Let $(\mathcal{P},\mathcal{B})$ be a
path-decomposition of $G$ whose largest bag has size $w+1$ and $B_1, B_2,
\ldots, B_j\in \mathcal{B}$ distinct bags where $\forall a, s_a\in B_a$
(assigning a distinct bag to each $s_a$ is always possible, as we can duplicate
bags if necessary). We call those bags \emph{important}. We define an ordering
$o:S\rightarrow$ I\!N of the vertices of $S$ that follows the order of the
important bags from left to right, that is $o(s_a) < o(s_b)$ if $B_a$ is on the
left of $B_b$ in $\mathcal{P}$. For simplicity, let us assume that $o(s_a) = a$
and that $B_a$ is to the left of $B_b$ if $a<b$. 

We describe a recursive way to do the substitution of the trees in the
tree-filling operation. Crucially, when $j>2$ we will have to select an
appropriate mapping between the vertices of $S$ and the disjoint subtrees $T_t$
in the added binomial tree $T_i$, so that we will be able to keep the treewidth
of the new graph bounded.

\begin{itemize} 

\item If $j=1$ then $i = t+1$. We add to the graph a copy of $T_i$, arbitrarily
select the root of a copy of $T_t$ contained in $T_i$, and perform the
tree-filling operation as described.

\item Suppose that we know how to perform the substitution for sets of size at
most $\lceil j/2\rceil$, we will describe the substitution process for a set of
size $j$. We have $i=\lceil\log j \rceil +t+1$ and for all $j$ we have $\lceil
\log \lceil j/2\rceil \rceil = \lceil \log j \rceil -1$.  Split the set $S$
into two (almost) equal disjoint sets $S^L$ and $S^R$ of size at most $\lceil
j/2\rceil$, where for all $s_a\in S^L$ and for all $s_b\in S^R$, $a<b$. We
perform the tree-filling on each of these sets by constructing two binomial
trees $T_{i-1}^L, T_{i-1}^R$ and doing the substitution; then, we connect their
roots and set the root of the left tree as the root $r_i$ of $T_i$, thus
creating the substitution of a tree $T_i$.  \end{itemize}

\textbf{Small treewidth.} We now prove that the new graph $H$ that results from
applying the tree-filling operation on $G$ and $S$ as described above has a
tree decomposition $(\mathcal{T},\mathcal{B'})$ of width $4w+5$; in fact we
prove by induction on $j$ a stronger statement: if $A,Z\in \mathcal{B}$ are the
left-most and right-most bags of $\mathcal{P}$, then there exists a tree
decomposition $(\mathcal{T},\mathcal{B'})$ of $H$ of width $4w+5$ with the
added property that there exists $R\in\mathcal{B'}$ such that $A\cup
Z\cup\{r_i\}\subset R$, where $r_i$ is the root of the tree $T_i$. 

For the base case, if $j=1$ we have added to our graph a $T_i$ of which we have
selected an arbitrary sub-tree $T_t$, and identified the root $r_t$ of $T_t$
with the unique vertex of $S$ that has a target. Take the path decomposition
$(\mathcal{P},\mathcal{B})$ of the initial graph and add all vertices of $A$
(its first bag) and the vertex $r_i$ (the root of $T_i$) to all bags. Take an
optimal tree decomposition of $T_i$ of width $1$ and add $r_i$ to each bag, obtaining a
decomposition of width $2$. We add an edge between the bag of $\mathcal{P}$
that contains the unique vertex of $S$, and a bag of the decomposition of $T_i$
that contains the selected $r_t$. We now have a tree decomposition of the new
graph of width $2w+2<4w+5$. Observe that the last bag of $\mathcal{P}$ now
contains all of $A,Z$ and $r_i$.

For the inductive step, suppose we applied the tree-filling operation for a set
$S$ of size $j>1$. Furthermore, suppose we know how to construct a tree
decomposition with the desired properties (width $4w+5$, one bag contains the
first and last bags of the path decomposition $\mathcal{P}$ and $r_i$), if we
apply the tree-filling operation on a target set of size at most $j-1$. We show
how to obtain a tree decomposition with the desired properties if the target
set has size $j$.

By construction, we have split the set $S$ into two sets $S^L,S^R$ and have applied the
tree-filling operation to each set separately. Then, we connected the roots of
the two added trees to obtain a larger binomial tree. Observe that for
$|S|=j>1$ we have $|S^L|,|S^R|<j$.

Let us first cut $\mathcal{P}$ in two parts, in such a way that the important
bags of $S^L$ are on the left and the important bags of $S^R$ are on the right.
We call $A^L = A$ and $Z^L$ the leftmost and rightmost bags of the left part
and $A^R$, $Z^R = Z$ the leftmost and rightmost bags of the right part. We
define as $G^L$ (respectively $G^R$) the graph that contains all the vertices
of the left (respectively right) part. Let $r_i$ be the root of $T_i$ and
$r_{i-1}$ the root of its subtree $T_{i-1}$. From the inductive hypothesis, we
can construct tree decompositions $(\mathcal{T^L,B^L})$, $(\mathcal{T^R,B^R})$
of width $4w+5$ for the graphs $H^L$, $H^R$ that occur after applying
tree-filling on $G^L, S^L$ and $G^R, S^R$; furthermore, there exist $R^L \in
\mathcal{B^L}, R^R\in \mathcal{B^R}$ such that $R^L \supseteq A \cup Z^L \cup
\{r_i\}$ and $R^R \supseteq A^R \cup Z \cup \{r_{i-1}\}$. 

We construct a new bag $R' = A \cup A^R \cup Z^L \cup Z \cup \{r_{i-1}, r_i\}$,
and we connect $R'$ to both $R^L$ and $R^R$, thus combining the two
tree-decompositions into one. Last we create a bag $R = A\cup Z \cup \{r_i\}$
and attach it to $R'$. This completes the construction of $(\mathcal{T,B'})$.

Observe that $(\mathcal{T,B'})$ is a valid tree-decomposition for $H$: 

\begin{itemize} 

\item $V(H) = V(H^L) \cup V(H^R)$, thus $\forall v\in V(H), v\in \mathcal{B^L}
\cup \mathcal{B^R} \subset \mathcal{B}$.  

\item $E(H) = E(H^L) \cup E(H^R) \cup \{(r_{i-1}, r_i)\}$. We have that
$r_{i-1}, r_i\in R'\in \mathcal{B}$. All other edges were dealt with in
$\mathcal{T^L, T^R}$.  

\item Each vertex $v\in V(H)$ that belongs in exactly one of $H^L, H^R$
trivially satisfied the connectivity requirement: bags that contain $v$ are
either fully contained in $\mathcal{T^L}$ or $\mathcal{T^R}$. A vertex $v$ that
is in both $H^L$ and $H^R$ is also in $Z^L\cap A^R$ due to the 
properties of path-decompositions, hence in $R'$.
Therefore, the sub-trees of bags that contain $v$ in $\mathcal{T^L,T^R}$, form
a connected sub-tree in $\mathcal{T}$.
\end{itemize} 

The width of $\mathcal{T}$ is $\max\{tw(H^L), tw(H^R), |R'|-1\} = 4w + 5$.
\end{proof}

The last thing that remains to do in order to complete the proof is to show the equivalence between achieving the targets and finding a Grundy coloring.

\begin{lemma}\label{lemma:thm} Let $G$ and $G'$ be two graphs as described in
Lemma~\ref{lemma:reduction-TGC} and let $H$ be constructed from $G'$ by using
the tree-filling operation. Then $G$ has a clique of size $k$ if and only if $\Gamma(H)
\ge \lceil\log(k(m+1)+{k\choose 2}+2m)\rceil  + 2\log n + 5$. Furthermore,
$tw(H) \le 4{k \choose 2} +8k +17$.  \end{lemma}

\begin{proof} We note that the number of vertices with targets in our
construction is $m'=k(m+1)+{k\choose 2}+2m$ (the propagators, edge selection
checkers, and edge-checkers).  From Lemma~\ref{lemma:reduction-TGC}, it only
suffices to show that $\Gamma(H) \ge \lceil \log m'\rceil + 2\log n + 5$ if and only if
the vertices with targets achieve color $t=2\log n +4$.

For the forward direction, once vertices with targets get the desirable colors,
the rest of the binomial tree of the tree-filling operation can be colored
optimally, starting from its leaves all the way up to its roots, which will get
color $i=\lceil\log m'\rceil + 2\log n + 5$. 

For the converse direction, observe that the only vertices having degree higher
than $2\log n +4$ are the edge-checkers and the vertices of the binomial tree
$H\setminus G'$. However, the edge-checkers connect to only one vertex of
degree higher than $2\log n +4$, that in the binomial tree. Thus no vertex of
$G'$ can ever get a color higher than $2\log n +6$ and the only way that
$\Gamma(H) \ge \lceil\log m'\rceil + 2\log n + 5$ is if the root of the
binomial tree of the tree-filling operation (the only vertex of high enough
degree) receives color $\lceil\log m'\rceil + 2\log n + 5$. For that to happen,
all the support-trees of this tree should be colored optimally, which proves
that the vertices with targets $2 \log n +4$ having substituted support trees
$T_{2 \log n +4}$ should achieve their targets.

In terms of the treewidth of $H$ we have the following:
Lemma~\ref{lemma:smallpw} says that $G'$ once we remove all the supporting
trees has pathwidth at most ${k \choose 2} + 2k +3$. Applying
Lemma~\ref{lemma:pwtotw} we get that $H$ where we have ignored the
tree-supports from $G'$ has treewidth at most $4\left({k \choose 2} +2k
+3\right) +5$. Adding back the tree-supports does not increase its treewidth.
\end{proof}

The main theorem of this section now immediately follows.

\begin{theorem}\label{thm:tw} \textsc{Grundy Coloring} parameterized by treewidth is W[1]-hard.
\end{theorem}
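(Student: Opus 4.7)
The plan is to assemble the preceding lemmas into a direct FPT-reduction from $k$-\textsc{Multi-Colored Clique}, which is W[1]-hard parameterized by $k$~\cite{FellowsHRV09}. Given an instance $G=(V_1,\ldots,V_k,E)$ of $k$-\textsc{Multi-Colored Clique}, I would first apply the construction of Lemma~\ref{lemma:reduction-TGC} to obtain an instance $G'$ of \textsc{Grundy Coloring with Targets} together with a target set $S$ and target value $t = 2\log n + 4$, so that $G$ contains a $k$-clique if and only if $G'$ admits a target-achieving Grundy coloring. I would then apply the tree-filling operation from the construction preceding Lemma~\ref{lemma:pwtotw} on $S$, producing the final graph $H$ that serves as the plain \textsc{Grundy Coloring} instance.

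The correctness of the reduction is then packaged inside Lemma~\ref{lemma:thm}: it establishes that $\Gamma(H) \ge \lceil \log m' \rceil + 2\log n + 5$, where $m' = k(m+1) + {k\choose 2} + 2m$ is the total number of target vertices, if and only if all vertices of $S$ can simultaneously achieve their target in some Grundy coloring of $G'$, which in turn is equivalent (by Lemma~\ref{lemma:reduction-TGC}) to $G$ containing a $k$-clique. The threshold $\lceil \log m' \rceil + 2\log n + 5$ is an explicit function of $n,m,k$ computable in polynomial time, so the produced pair $(H, \lceil \log m' \rceil + 2\log n + 5)$ is a well-defined \textsc{Grundy Coloring} instance with the desired equivalence to the source problem.

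For the parameter bound, Lemma~\ref{lemma:thm} also gives $tw(H) \le 4{k\choose 2} + 8k + 17$, which depends purely on $k$. Since the construction runs in polynomial time (all gadgets of $G'$ are of size polynomial in $n$ and $m$, and each binomial tree attached during tree-filling has size at most $2^{\lceil \log m' \rceil + 2\log n + 5} = \mathrm{poly}(n,m)$), the above is a valid FPT-reduction from $k$-\textsc{Multi-Colored Clique} parameterized by $k$ to \textsc{Grundy Coloring} parameterized by treewidth, yielding the claimed W[1]-hardness. At this stage there is essentially no remaining technical obstacle: the entire difficulty of the proof has already been absorbed by Lemmas~\ref{lemma:reduction-TGC}, \ref{lemma:pwtotw}, and \ref{lemma:thm}, and the theorem itself is a one-step corollary. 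The only bookkeeping left is to confirm polynomial running time and the treewidth bound, both of which are immediate from the cited lemmas.
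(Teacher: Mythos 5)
Your proposal is correct and mirrors the paper exactly: the authors likewise observe that Theorem~\ref{thm:tw} follows immediately from Lemma~\ref{lemma:thm} (which in turn packages Lemmas~\ref{lemma:reduction-TGC}, \ref{lemma:smallpw}, and~\ref{lemma:pwtotw}), giving an FPT-reduction from $k$-\textsc{Multi-Colored Clique} with treewidth bounded by a function of $k$ alone. The bookkeeping you spell out (polynomial size of the binomial trees, explicit threshold, treewidth bound) is exactly what makes the one-line corollary legitimate.
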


\section{FPT for pathwidth}\label{sec:pw}

In this section, we show that, in contrast to treewidth, \textsc{Grundy
Coloring} is FPT parameterized by pathwidth. This is achieved by a combination
of an algorithm for \grundy\ given by Telle and Proskurowski and a
combinatorial bound due to Dujmovic, Joret, and Wood. We first recall these
results below.

\begin{lemma}[\cite{DujmovicJW12}]\label{lem:pw-upper-bound} For every graph
$G$, $\Gamma(G) \leq 8\cdot (pw(G)+1)$.  \end{lemma}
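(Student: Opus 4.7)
The plan is to derive the bound by combining the structural relationship between pathwidth and interval graphs with the tight bound on the First-Fit chromatic number of interval graphs due to Narayanaswamy--Babu~\cite{NB08} (building on Kierstead--Smith--Trotter~\cite{KST16}). The target is the chain
\[
\Gamma(G) \;\leq\; \Gamma(H) \;\leq\; 8\,\omega(H) \;\leq\; 8\,(pw(G)+1),
\]
where $H$ is a carefully chosen interval supergraph of $G$ and only the first inequality is nontrivial.

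For the reduction to an interval graph, I would use the well-known fact that every graph of pathwidth at most $k$ is a spanning subgraph of some interval graph of clique number at most $k+1$: given a path decomposition $B_1,\dots,B_r$ of $G$ of width $k$, set $I_v := \{j : v \in B_j\}$ and let $H$ be the intersection graph of the intervals $\{I_v\}$. Then $E(G) \subseteq E(H)$ and $\omega(H) \leq k+1$, so the Narayanaswamy--Babu theorem immediately yields $\Gamma(H) \leq 8\omega(H) \leq 8(k+1)$. The remaining task is to establish the transfer $\Gamma(G) \leq \Gamma(H)$.

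For this transfer, I would avoid any appeal to subgraph monotonicity of $\Gamma$ (which genuinely fails in general: adding edges can both raise and lower First-Fit colours) and instead use a witness-based argument. Starting from a Grundy coloring of $G$ with $\Gamma(G) = m$ colors, I would extract a witness substructure $W \subseteq G$ rooted at a vertex of colour $m$, where at each vertex of colour $i$ we designate one $G$-neighbour of each colour $1,\dots,i-1$. Since every edge of $W$ lives in $G$ and hence in $H$, I would feed the vertices of $V(H)$ to First-Fit by processing the witness vertices in increasing order of their colour (colour-$1$ witnesses first, root last) and appending the remaining vertices afterwards; an induction on the colour then shows that each colour-$i$ witness receives colour \emph{at least} $i$ in $H$, so the root receives colour at least $m$ and $\Gamma(H) \geq m$.

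The main obstacle is ensuring that no intermediate witness vertex is forced by the extra edges of $H \setminus G$ to receive a colour strictly \emph{greater} than its intended value, which would create gaps in the staircase $1, 2, \dots, m$ seen by its parent. I would address this by refining the choice of witness using the flexibility in picking the required neighbours, and by exploiting that any $H$-edge between two witness vertices must correspond to overlapping bags of the path decomposition, so that the number of spurious interferences at any single vertex is controlled by $\omega(H) \leq k+1$. Once this step is handled, the three-step chain above delivers $\Gamma(G) \leq 8(pw(G)+1)$, proving the lemma.
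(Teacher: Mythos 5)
The paper does not actually prove this lemma: it is cited verbatim from~\cite{DujmovicJW12}, and the surrounding text only says that the bound rests on the First-Fit-on-interval-graphs results of~\cite{KST16,NB08}. Your high-level outline (pass to the interval supergraph $H$ of a path decomposition, then invoke $\Gamma(H)\le 8\,\omega(H)\le 8(pw(G)+1)$) is therefore the right ballpark, and the last two inequalities are indeed immediate. The problem is the first inequality, $\Gamma(G)\le\Gamma(H)$, which you are correct to flag as the crux, and which your sketch does not establish.

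The gap is not a technicality. The Grundy number is emphatically not monotone under adding edges --- e.g.\ the crown graph $K_{n,n}$ minus a perfect matching has Grundy number $n$, yet adding the matching back yields $K_{n,n}$ with Grundy number $2$ --- so any argument for $\Gamma(G)\le\Gamma(H)$ has to exploit the specific structure of the interval completion, and you never do. Your witness-tree simulation breaks at exactly the point you worry about: if a designated witness of target colour $j$ is pushed by spurious $H$-edges to colour $j'>j$, then some colour in $[j,j'-1]$ may be missing from its parent's First-Fit neighbourhood, and the parent can then land \emph{below} its target; the failure can cascade upward and is not merely additive. Your proposed patch appeals to $\omega(H)\le pw(G)+1$ to ``control interferences at any single vertex,'' but $\omega(H)$ bounds how many intervals overlap at one point of the path, not the total number of $H$-neighbours a witness has accumulated by the time it is processed, so it does not bound the colour inflation at a vertex, let alone across all $\Gamma(G)$ levels of the witness tree. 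As written, the key inequality is unproved and the patch is a hand-wave; to close the gap you would need either an explicit construction of an $H$-ordering that provably reproduces every witness colour exactly, or a different route entirely (e.g.\ carrying out a Kierstead--Smith--Trotter-style potential/wall argument directly on the path decomposition rather than reducing to the black-box interval bound).
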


\begin{lemma}[\cite{TelleP97}] There is an algorithm which solves \grundy\ in
time $O^*(2^{O(tw(G)\cdot\Gamma (G))})$.\end{lemma}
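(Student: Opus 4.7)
The plan is a standard dynamic programming on a nice tree decomposition of $G$. First, compute in FPT time a nice tree decomposition $(T,\{X_t\})$ of width $w := tw(G)$, and let $\Gamma := \Gamma(G)$. The DP runs bottom-up and decides, for each candidate value $k \le \Gamma$, whether $G$ admits a $k$-Grundy coloring; the algorithm reports the maximum such $k$ (equivalently, one can carry the maximum color used so far inside the state with only a $\Gamma$ factor blow-up).

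The state at a bag $X_t$ consists of (i) a coloring $\hat c : X_t \to [k]$ and (ii) for each vertex $v \in X_t$, a \emph{deficit set} $D_v \subseteq [\hat c(v)-1]$ of colors that $v$ still needs a neighbor of. Recalling the characterization stated right after Definition~\ref{def:grundy}, a proper coloring is Grundy iff every vertex of color $i$ has, for every $j < i$, a neighbor of color $j$; the set $D_v$ records exactly which of those required colors have not yet been witnessed by a neighbor in the subtree rooted at $t$. A state is \emph{feasible} if it extends to a proper coloring of $G_t$ (the subgraph on vertices introduced in the subtree) agreeing with $\hat c$ on $X_t$, in which every vertex already forgotten has empty deficit. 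The number of states per bag is at most $(k \cdot 2^k)^{w+1} = 2^{O(w\Gamma)}$, matching the claimed complexity.

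The transitions are routine. At an \emph{introduce} node adding $v$, guess $\hat c(v)$, reject if any neighbor already in $X_t$ has the same color, set $D_v := [\hat c(v)-1] \setminus \{\hat c(u) : u \in N(v) \cap X_t\}$, and for every $u \in N(v) \cap X_t$ with $\hat c(u) > \hat c(v)$ remove $\hat c(v)$ from $D_u$. At a \emph{forget} node dropping $v$, keep only states with $D_v = \emptyset$; since the tree decomposition guarantees $v$ has no neighbors outside $G_t$, any remaining deficit would be permanently unsatisfied. At a \emph{join} node pairing two child states with identical colorings $\hat c$, the combined deficit of each $v \in X_t$ is the \emph{intersection} of the two children's deficits, since a color $j$ remains missing iff neither subtree has supplied a neighbor of color $j$. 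Correctness follows by induction on $T$, using that every edge of $G$ lies in some bag, so each witness requirement is accounted for exactly once.

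The work per node is $2^{O(w\Gamma)}$, dominated by the pairing of state pairs at join nodes, and $T$ has $O(n)$ nodes, giving total running time $O^*(2^{O(tw(G) \cdot \Gamma(G))})$. The only step requiring genuine care is the join, where it is crucial that deficits combine by intersection rather than by union or addition; once one commits to the deficit-set formulation of the Grundy witness requirement, the rest is a mechanical tree-decomposition DP, and combining this lemma with Lemma~\ref{lem:pw-upper-bound} will immediately yield the FPT algorithm for pathwidth promised in Theorem~\ref{thm:pw}.
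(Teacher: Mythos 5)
The paper does not prove this lemma itself; it simply cites Telle and Proskurowski's general framework for vertex-partitioning problems with degree constraints, of which Grundy coloring (via the characterization after Definition~\ref{def:grundy}) is a special case. Your dynamic program—tracking for each bag vertex its color and the set of lower colors not yet witnessed by a neighbor, with introduce/forget updates and intersection at join nodes—is a correct, self-contained instantiation of exactly what that framework yields for this problem, and the $2^{O(tw\cdot\Gamma)}$ state-count matches. Your parenthetical handling of ``which target $k$ was actually attained'' (carry the max color used, or iterate over $k$) is the right fix for the one loose end; otherwise this is the standard proof.
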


We thus get the following result.

\begin{theorem}\label{thm:pw}
\textsc{Grundy Coloring} can be solved in time $O^*(2^{O(pw(G)^2)})$.
\end{theorem}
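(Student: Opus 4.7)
The plan is to observe that Theorem \ref{thm:pw} follows almost immediately by plugging the combinatorial bound of Lemma \ref{lem:pw-upper-bound} into the algorithmic bound of the Telle--Proskurowski lemma, together with the trivial inequality $tw(G) \le pw(G)$. So the proof is really a one-line composition, and the task reduces to checking that each input the algorithm needs is available within the claimed time.

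First, I would obtain a path decomposition (equivalently, for algorithmic purposes, a tree decomposition) of $G$ whose width is $O(pw(G))$. This can be done in FPT time in $pw(G)$, for instance by Bodlaender's algorithm, or by any constant-factor FPT approximation of pathwidth; the exact constant is irrelevant since we only care about the asymptotic exponent. Next, I would apply Lemma \ref{lem:pw-upper-bound} to conclude $\Gamma(G) \le 8(pw(G)+1)$, so the Grundy number is linear in the pathwidth. Finally, feeding this decomposition into the Telle--Proskurowski algorithm and using $tw(G) \le pw(G)$ together with $\Gamma(G) = O(pw(G))$ gives a running time of
\[
O^*\!\left(2^{O(tw(G)\cdot \Gamma(G))}\right) \;=\; O^*\!\left(2^{O(pw(G)\cdot pw(G))}\right) \;=\; O^*\!\left(2^{O(pw(G)^2)}\right),
\]
as claimed.

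There is essentially no obstacle: both ingredients are cited as black boxes, and the composition is routine. The only subtlety worth spelling out is the initial computation of a decomposition of width $O(pw(G))$, so that the algorithm of Telle and Proskurowski (which assumes a decomposition is given) can actually be invoked within FPT time. Once that is noted, the theorem is immediate.
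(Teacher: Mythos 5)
Your proof is correct and takes essentially the same approach as the paper: combine the bound $\Gamma(G)\le 8(pw(G)+1)$ from Lemma~\ref{lem:pw-upper-bound} with the Telle--Proskurowski algorithm and the inequality $tw(G)\le pw(G)$. The remark about first computing a decomposition is a reasonable extra detail, though the paper leaves it implicit.
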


\begin{proof} Since in all graphs $tw(G)\le pw(G)$ and by Lemma
\ref{lem:pw-upper-bound} $\Gamma(G)\le 8(pw(G)+1)$, we have
$tw(G)\cdot\Gamma(G) = O(pw(G)^2)$ and the algorithm of \cite{TelleP97} runs in
at most the stated time. \end{proof}

\section{NP-hardness for Constant Clique-width}\label{sec:cw}

In this section we prove that \textsc{Grundy Coloring} is NP-hard even for
constant clique-width via a reduction from \textsc{3-SAT}.  We use a similar
idea of adding supports as in Section~\ref{sec:tw}, but supports now will be
cliques instead of binomial trees. The support operation is defined as:

\begin{definition} Given a graph $G=(V,E)$, a vertex $u\in V$ and a set of
positive integers $S$, we define the \textbf{support} operation as follows: for
each $i\in S$, we add to $G$ a clique of size $i$ (using new vertices) and we
connect one arbitrary vertex of each such clique to $u$.  \end{definition}

When applying the support operation we will say that we support vertex $u$ with
set $S$ and we will call the vertices introduced supporting vertices. 
Intuitively, the
support operation ensures that the vertex $u$ may have at least one neighbor
with color $i$ for each $i\in S$.

We are now ready to describe our construction. Suppose we are given a 3CNF
formula $\phi$ with $n$ variables $x_1,\ldots,x_n$ and $m$ clauses
$c_1,\ldots,c_m$.  We assume without loss of generality that each clause
contains exactly three variables.
We construct a graph $G(\phi)$ as follows:

\begin{enumerate}

\item For each $i\in [n]$ we construct two vertices $x_i^P, x_i^N$ and the edge
$(x_i^P,x_i^N)$.

\item For each $i\in [n]$ we support the vertices $x_i^P, x_i^N$ with the set
$[2i-2]$. (Note that $x_1^P, x_1^N$ have empty support).

\item For each $i\in [n], j\in [m]$, if variable $x_i$ appears in clause $c_j$
then we construct a vertex $x_{i,j}$. Furthermore, if $x_i$ appears positive in
$c_j$, we connect $x_{i,j}$ to $x_{i'}^P$ for all $i'\in [n]$; otherwise we
connect $x_{i,j}$ to $x_{i'}^N$ for all $i'\in [n]$.

\item For each $i\in [n], j\in[m]$ for which we constructed a vertex $x_{i,j}$
in the previous step, we support that vertex with the set $(\{ 2k\ |\ k\in[n]\}
\cup \{2i-1, 2n+1, 2n+2\})\setminus \{2i\}$.

\item For each $j\in[m]$ we construct a vertex $c_j$ and connect to all (three)
vertices $x_{i,j}$ already constructed. We support the vertex $c_j$ with the
set $[2n]$.

\item For each $j\in[m]$ we construct a vertex $d_j$ and connect it to $c_j$.
We support $d_j$ with the set $[2n+3] \cup [2n+5, 2n+3+j]$.

\item We construct a vertex $u$ and connect it to $d_j$ for all $j\in[m]$. We
support $u$ with the set $[2n+4]\cup [2n+5+m, 10n+10m]$.

\end{enumerate}

This completes the construction. Before we proceed, let us give some intuition.
Observe that we have constructed two vertices $x_i^P, x_i^N$ for each variable.
The support of these vertices and the fact that they are adjacent, allow us to
give them colors $\{2i-1, 2i\}$. The choice of which gets the higher color
encodes an assignment to variable $x_i$. The vertices $x_{i,j}$ are now
supported in such a way that they can ``ignore'' the values of all variables
except $x_i$; for $x_i$, however, $x_{i,j}$ ``prefers'' to be connected to a
vertex with color $2i$ (since $2i-1$ appears in the support of $x_{i,j}$, but
$2i$ does not).  Now, the idea is that $c_j$ will be able to get color $2n+4$
if and only if one of its literal vertices $x_{i,j}$ was ``satisfied'' (has a
neighbor with color $2i$). The rest of the construction checks if all clause
vertices are satisfied in this way.

We now state the lemmata that certify the correctness of our reduction. 

\begin{lemma}\label{lem:cw1} If $\phi$ is satisfiable then $G(\phi)$ has a
Grundy coloring with $10n+10m+1$ colors. \end{lemma}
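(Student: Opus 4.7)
The plan is to exhibit an explicit ordering of $V(G(\phi))$ whose associated First-Fit coloring uses $10n+10m+1$ colors, assuming a satisfying assignment $A$ of $\phi$. I would process the construction bottom-up in five phases, so that each layer is colored only after the vertices that should serve as its lower-colored neighbors.

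First, for each variable $i \in [n]$, I would color the $2i-2$ support cliques of $x_i^P, x_i^N$ so that every color in $[2i-2]$ appears on a vertex attached to the pair. Then, depending on $A(x_i)$, I order $\{x_i^P, x_i^N\}$ so that one of them is colored first and receives $2i-1$, after which the other sees all colors in $[2i-1]$ among its neighbors and receives $2i$. The choice of which vertex ``wins'' encodes the truth value of $x_i$.

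The core of the proof is the coloring of the literal vertices. For each clause $c_j$, I fix one satisfied literal $\ell_j$ with variable $x_{i_j}$ and aim to give $x_{i_j,j}$ color $2n+3$. The vertex $x_{i_j,j}$ has exactly $2n+2$ neighbors ($n+2$ support cliques and $n$ variable vertices), so I would argue that the supports can be ordered internally to make their attached vertices jointly cover $[2n+2]$: supports of sizes $2n+1, 2n+2$ contribute $2n+1, 2n+2$; the support of size $2i_j-1$ contributes $2i_j-1$; for each $i' \neq i_j$, the pair (support of size $2i'$, variable neighbor $i'$) covers $\{2i'-1, 2i'\}$, with the support taking whichever color the already-fixed variable neighbor does not. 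The missing color $2i_j$, which is deliberately not present in any support of $x_{i_j,j}$, must come from variable neighbor $i_j$, and this is possible precisely because $\ell_j$ is satisfied and so $x_{i_j}^P$ (or $x_{i_j}^N$ in the negative case) already carries color $2i_j$. For the other two literal vertices in $c_j$, I would show that they can be driven to colors $2n+2$ and $2n+1$ irrespective of their literals' truth values: the only hard color is $2i$ for a possibly unsatisfied literal of variable $x_i$, but this can be supplied by support $2n+2$ at the cost of not covering $2n+2$ in that neighborhood, which is acceptable since we aim below $2n+3$.

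Once the three literals in $c_j$ are colored $\{2n+1, 2n+2, 2n+3\}$, coloring the $2n$ supports of $c_j$ optimally makes it see all colors in $[2n+3]$ and receive color $2n+4$. The delay $d_j$, with supports $[2n+3] \cup [2n+5, 2n+3+j]$ colored optimally plus the neighbor $c_j$ of color $2n+4$, sees $[2n+3+j]$ and takes color $2n+4+j$. Finally, coloring $u$ last, its supports give $[2n+4] \cup [2n+5+m, 10n+10m]$ and its neighbors $d_1, \ldots, d_m$ contribute $[2n+5, 2n+4+m]$, together covering $[10n+10m]$; thus $u$ receives color $10n+10m+1$. The main obstacle is the literal phase: one must simultaneously achieve three different target colors on vertices that share no edges but are constrained by a single global assignment $A$, and must do so while correctly orchestrating the interplay of each literal vertex's $n+2$ supports and $n$ variable neighbors.
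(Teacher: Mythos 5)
Your proposal is correct and follows essentially the same approach as the paper: color the variable pair $\{x_i^P, x_i^N\}$ to encode the assignment, give the satisfied literal vertex color $2n+3$ by exploiting the deliberately missing support $2i$, propagate through $c_j$, $d_j$, and $u$. The only differences are cosmetic — you use the large support clique ($2n+2$) sub-optimally to plug the hole at color $2i$ for the other two literal vertices, whereas the paper shifts the supports of sizes $\geq 2i$ down by one slot, but both work; also note that $x_{i_j,j}$ actually has degree $2n+3$ (you omitted $c_j$), though this is harmless since $c_j$ is ordered later.
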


\begin{proof}

Consider a satisfying assignment of $\phi$. We first produce a coloring of the
vertices $x_i^P, x_i^N$ as follows: if $x_i$ is set to True, then $x_i^P$ is
colored $2i$ and $x_i^N$ is colored $2i-1$; otherwise $x_i^P$ is colored $2i-1$
and $x_i^N$ is colored $2i$. Before proceeding, let us also color the
supporting vertices of $x_i^P, x_i^N$: each such vertex belongs to a clique
which contains only one vertex with a neighbor outside the clique. For each
such clique of size $\ell$, we color all vertices of the clique which have no
outside neighbors with colors from $[\ell-1]$ and use color $\ell$ for the
remaining vertex. Note that the coloring we have produced so far is a valid
Grundy coloring, since each vertex $x_i^P, x_i^N$ has for each $c\in[2i-2]$ a
neighbor with color $c$ among its supporting vertices, allowing us to use
colors $\{2i-1, 2i\}$ for $x_i^P, x_i^N$. In the remainder, we will use similar
such colorings for all supporting cliques. We will only stress the color given
to the vertex of the clique that has an outside neighbor, respecting the
condition that this color is not larger than the size of the clique. Note that
it is not a problem if this color is strictly smaller than the size of the
clique, as we are free to give higher colors to internal vertices.

Consider now a clause $c_j$ for some $j\in[m]$. Suppose that this clause
contains the three variables $x_{i_1}, x_{i_2}, x_{i_3}$. Because we started
with a satisfying assignment, at least one of these variables has a value that
satisfies the clause, without loss of generality $x_{i_3}$. We therefore color
$x_{i_1}, x_{i_2}, x_{i_3}$ with colors $2n+1, 2n+2, 2n+3$ respectively and we
color $c_j$ with color $2n+4$. We now need to show that we can appropriately
color the supporting vertices to make this a valid Grundy coloring.

Recall that the vertex $x_{i_3}$ has support $\{2,4,\ldots, 2n\}\setminus
\{2i_3\} \cup \{2i_3-1, 2n+1, 2n+2\}$. For each $i'\neq i_3$ we observe that
$x_{i_3}$ is connected to a vertex (either $x_{i_3}^P$ or $x_{i_3}^N$) which
has a color in $\{2i'-1, 2i'\}$, we are therefore missing the other color from
this set.  We consider the clique of size $2i'$ supporting $x_{i_3,j}$: we
assign this missing color to the vertex of this clique that is adjacent to
$x_{i_3,j}$. Note that the clique is large enough to color its remaining
vertices with lower colors in order to make this a valid Grundy coloring. For
$i_3$, we observe that, since $x_{i_3}$ satisfies the clause, the vertex
$x_{i_3,j}$ has a neighbor (either $x_{i_3}^P$ or $x_{i_3}^N$) which has
received color $2i_3$; we use color $2i_3-1$ in the support clique of the same
size. Similarly, we use colors $2n+1, 2n+2$ in the support cliques of the same
sizes, and $x_{i_3}$ has neighbors with colors covering all of $[2n+2]$.

For the vertex $x_{i_2,j}$ we proceed in a similar way. For $i'< i_2$ we give
the support vertex from the clique of size $2i'$ the color from $\{2i'-1,
2i'\}$ which does not already appear in the neighborhood of $x_{i_2,j}$. For
$i'\in [i_2, n-1]$ we take the vertex from the clique of size $2i'+2$ and give
it the color of $\{2i'-1, 2i'\}$ which does not yet appear in the neighborhood
of $x_{i_2,j}$. In this way we cover all colors in $[2n-2]$. We now observe
that $x_{i_2,j}$ has a neighbor with color in $\{2n-1, 2n\}$ (either $x_n^P$ or
$x_n^N$); together with the support vertices from the cliques of sizes $2n+1,
2n+2$ this allows us to cover the colors $[2n-1, 2n+1]$. We use a similar
procedure to cover the colors $[2n]$ in the neighborhood of $x_{i_1,j}$. Now,
the $2n$ support vertices in the neighborhood of $c_j$, together with
$x_{i_1,j}, x_{i_2,j}, x_{i_3,j}$ allow us to give that vertex color $2n+4$.

We now give each vertex $d_j$, for $j\in[m]$ color $2n+j+4$. This can be
extended to a valid coloring, because $d_j$ is adjacent to $c_j$, which has
color $2n+4$, and the support of $d_j$ is $[2n+j+3]\setminus \{2n+4\}$.

Finally, we give $u$ color $10n+10m+1$. Its support is $[10n+10m]\setminus
[2n+5, 2n+m+4]$. However, $u$ is adjacent to all vertices $d_j$, whose colors
cover the set $\{ 2n+4+j\ |\ j\in[m]\}$.  \end{proof}

\begin{lemma}\label{lem:cw2} If $G(\phi)$ has a Grundy coloring with
$10n+10m+1$ colors, then $\phi$ is satisfiable. \end{lemma}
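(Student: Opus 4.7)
The plan is to work top-down, forcing the colour assignment from $u$ all the way down to the variable vertices $x_i^P, x_i^N$, and then read off a satisfying assignment. First, a degree count shows that only $u$ can receive colour $10n+10m+1$ (every other vertex has degree strictly below $10n+10m$), so it must. Its $10n+10m$ neighbours must then cover each colour in $[10n+10m]$ exactly once; since $u$'s support cliques only reach the colours in $[2n+4]\cup[2n+5+m,10n+10m]$, the $m$ external neighbours $d_1,\dots,d_m$ are the sole providers of the gap $[2n+5,2n+m+4]$, and a monotone Hall-type argument (using $\deg(d_j)=2n+j+4$ and the fact that $d_j$'s support reaches only up to $2n+3+j$) forces $d_j=2n+j+4$. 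An analogous counting inside $d_j$'s neighbourhood (its $j-1$ ``large'' support cliques together with $c_j$ must supply the $j$ colours $\{2n+4,\dots,2n+3+j\}$) then forces $c_j=2n+4$. Finally, since $c_j$'s support only reaches $[2n]$ and $d_j$ has colour $\geq 2n+5$, the three literal vertices of $c_j$ collectively take the three colours $\{2n+1,2n+2,2n+3\}$ in some order.

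The heart of the proof is then to analyse, in each clause, the unique literal vertex that received $2n+3$. Call this vertex $x_{i,j}$ and let $\sigma\in\{P,N\}$ be the polarity of its literal. Outside of $c_j$ (whose colour $2n+4$ is useless), $x_{i,j}$ has exactly $2n+2$ neighbours: its $n+2$ support cliques (of sizes $\{2,4,\dots,2n\}\setminus\{2i\}\cup\{2i-1,2n+1,2n+2\}$) together with the $n$ variable vertices $x_{1}^{\sigma},\dots,x_{n}^{\sigma}$. These $2n+2$ neighbours must realise the $2n+2$ colours in $[2n+2]$ bijectively. I would first establish a standalone bound $x_{i'}^{\sigma}\leq 2i'$: outside its $[2i'-2]$-reaching support, $x_{i'}^{\sigma}$'s only neighbours are its twin $x_{i'}^{\bar{\sigma}}$ and literal vertices (colour $\geq 2n+1$), so at most one additional colour below $2n+1$ can enter. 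With this bound in place, a direct computation shows that for every $j\in\{0,\dots,n+1\}$ the Hall cut $S_j=\{2j+1,\dots,2n+2\}$ is tight, that is $|N(S_j)|=|S_j|$ (one checks the cases $i\leq j$ and $i\geq j+1$ separately). Comparing the tight cuts at $S_i$ and $S_{i-1}$ isolates exactly two ``new'' providers, namely the variable vertex $x_i^{\sigma}$ and the support clique of size $2i-1$, which must cover $\{2i-1,2i\}$; since the latter has reach only $2i-1$, it must provide $2i-1$, and therefore $x_i^{\sigma}$ provides exactly colour $2i$.

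With this key claim I would define the truth assignment by $x_i:=\mathrm{True}$ iff $x_i^P$ has colour $2i$. This is well-defined because $x_i^P$ and $x_i^N$ are adjacent, so both cannot take colour $2i$; the same adjacency also rules out the only potential inter-clause conflict, namely two different clauses forcing opposite polarities of the same variable. Each clause is then satisfied because its top literal $(i,\sigma)$ corresponds to the side of $x_i$ coloured $2i$. The main technical obstacle is the Hall-theorem step: one must verify that the arithmetic of $x_{i,j}$'s support-clique sizes produces exactly the tight cuts needed, and that the incremental provider between consecutive cuts is precisely the pair $\{x_i^{\sigma},\,C_{2i-1}\}$, so that the variable vertex---rather than some large support clique---is pinned down as the unique source of colour $2i$.
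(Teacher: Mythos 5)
Your proof follows the paper's strategy closely: force the colors top-down from $u$ to $d_j$ to $c_j$ to the literal vertices, bound the colors of $x_{i'}^P,x_{i'}^N$ by $2i'$, and then pin down the $2n+2$ colors seen by the literal vertex with color $2n+3$. The Hall-cut tightness you set up (with the cuts $S_j = \{2j+1,\dots,2n+2\}$) is a clean, systematic reformulation of the paper's descending induction on $i$, and your arithmetic checks out: $|N(S_j)|=2n-2j+2=|S_j|$ in both the $i\le j$ and $i\ge j+1$ cases, and $N(S_{i-1})\setminus N(S_i)$ is exactly the pair $\{C_{2i-1},\,x_i^\sigma\}$, which forces $x_i^\sigma$ to color $2i$.

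One small inaccuracy worth fixing: your opening claim that ``every other vertex has degree strictly below $10n+10m$'' is false. The contact vertex of the support clique of size $10n+10m$ attached to $u$ also has degree exactly $10n+10m$, so a degree count alone does not single out $u$. The paper handles this by observing that if that contact vertex received color $10n+10m+1$ then $u$ would be forced to have color $10n+10m$ (since color $10n+10m$ cannot live inside the clique), and then the two colors can be swapped; hence one may assume WLOG that $u$ has the top color. You need this extra swap step (or some equivalent) before the rest of your argument is on solid ground. The remainder of the proposal is correct.
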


\begin{proof}

Consider a Grundy coloring of $G(\phi)$. We first assume without loss of
generality that we consider a minimal induced subgraph of $G$ for which the
coloring remains valid, that is, deleting any vertex will either reduce the
number of colors or invalidate the coloring. In particular, this means there is
a unique vertex with color $10n+10m+1$. This vertex must have degree at least
$10n+10m$. However, there are only two such vertices in our graph: $u$ and its
support neighbor vertex in the clique of size $10n+10m$. If the latter vertex
has color $10n+10m+1$, we can infer that $u$ has color $10n+10m$: this color
cannot appear in the clique because all its internal vertices have degree
$10n+10m-1$, and one of their neighbors has a higher color. We observe now that
exchanging the colors of $u$ and its neighbor produces another valid coloring.
We therefore assume without loss of generality that $u$ has color $10n+10m+1$.

We now observe that in each supporting clique of $u$ of size $i$ the maximum
color used is $i$ (since $u$ has the largest color in the graph). Similarly,
the largest color that can be assigned to $d_j$ is $2n+j+4$, because $d_j$ has
degree $2n+j+4$, but one of its neighbors ($u$) has a higher color. We conclude
that the only way for the $10n+10m$ neighbors of $u$ to cover all colors in
$[10n+10m]$ is for each support clique of size $i$ to use color $i$ and for
each $d_j$ to be given color $2n+j+4$.

Suppose now that $d_j$ was given color $2n+j+4$. This implies that the largest
color that $c_j$ may have received is $2n+4$, since its degree is $2n+4$, but
$d_j$ received a higher color. We conclude again that for the neighbors of
$d_j$ to cover $[2n+j+3]$ it must be the case that each supporting clique used
its maximum possible color and $c_j$ received color $2n+4$.

Suppose now that a vertex $c_j$ received color $2n+4$. Since $d_j$ received a
higher color, the remaining $2n+3$ neighbors of this vertex must cover
$[2n+3]$. In particular, since the support vertices have colors in $[2n]$, its
three remaining neighbors, say $x_{i_1,j}, x_{i_2,j}, x_{i_3,j}$ must have
colors covering $[2n+1, 2n+3]$. Therefore, all vertices $x_{i,j}$ have colors
in $[2n+1, 2n+3]$.

Consider now two vertices $x_i^P, x_i^N$, for some $i\in [n]$. We claim that
the vertex which among these two has the lower color, has color at most $2i-1$.
To see this observe that this vertex may have at most $2i-2$ neighbors from the
support vertices that have lower colors and these must use colors in $[2i-2]$
because of their degrees. Its neighbors of the form $x_{i,j}$ have color at
least $2n+1> 2i-1$, and its neighbor in $\{x_i^P, x_i^N\}$ has a higher color.
Therefore, the smaller of the two colors used for $\{x_i^P, x_i^N\}$ is at most
$2i-1$ and by similar reasoning the higher of the two colors used for this set
is at most $2i$. We now obtain an assignment for $\phi$ by setting $x_i$ to
True if $x_i^P$ has a higher color than $x_i^N$ and False otherwise (this is
well-defined, since $x_i^P, x_i^N$ are adjacent). 

Let us argue why this is a satisfying assignment. Take a clause vertex $c_j$.
As argued, one of its neighbors, say $x_{i_3,j}$ has color $2n+3$. The degree
of $x_{i_3,j}$, excluding $c_j$ which has a higher color, is $2n+2$, meaning
that its neighbors must exactly cover $[2n+2]$ with their colors. Since
vertices $x_i^P, x_i^N$ have color at most $2i$, the colors $[2n+1,2n+2]$ must
come from the support cliques of the same sizes. Now, for each $i\in[n]$ the
vertex $x_{i_3,j}$ has exactly two neighbors which may have received colors in
$\{2i-1, 2i\}$. This can be seen by induction on $i$: first, for $i=n$ this is
true, since we only have the support clique of size $2n$ and the neighbor in
$\{x_n^P, x_n^N\}$. Proceeding in the same way we conclude the claim for
smaller values of $i$. The key observation is now that the clique of size
$2i_3-1$ cannot give us color $2i_3$, therefore this color must come from
$\{x_{i_3}^N, x_{i_3}^P\}$. If the neighbor of $x_{i_3,j}$ in this set uses
$2i_3$, this must be the higher color in this set, meaning that $x_{i_3}$ has a
value that satisfies $c_j$.  \end{proof}

\begin{lemma}\label{lem:cw3} The graph $G(\phi)$ has clique-width at most 8.
\end{lemma}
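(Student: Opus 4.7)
The plan is to exhibit a $k$-expression for $G(\phi)$ with $k=8$. I will use labels $1,\dots,8$ with the following roles: label $1$ marks ``dead'' vertices (needing no further incident edges); labels $2$ and $3$ are used by a subroutine that builds one support clique (with label $3$ being the unique attachment vertex of the clique); label $4$ is always ``the vertex currently being processed'' when supports are attached; labels $5$ and $6$ store the pools of $x_i^P$ and $x_i^N$ vertices so that a later literal vertex can be joined to an entire pole in one operation; label $7$ is an auxiliary label holding either the three literals of the current clause, or temporarily $c_j$, or $d_j$; and label $8$ collects all $d_j$ vertices so that they can be joined to $u$ at the very end.

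The support subroutine attaches a clique $K_s$ to whatever vertex currently carries label $4$. It builds $K_{s-1}$ in labels $2$ and $3$ by repeatedly introducing a new vertex with label $3$, joining $2$ and $3$, and renaming $3 \to 2$; it then introduces the attachment vertex with label $3$, joins $2$ and $3$ to finish $K_s$, joins $3$ and $4$ to add the attachment edge, and finally renames $2 \to 1$ and $3 \to 1$. Its operations touch only labels $1,2,3$ in addition to the target label $4$, so it can be invoked whenever a single ``active'' vertex sits in label $4$ and labels $2,3$ are empty.

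The construction has three phases. In phase one, for each $i \in [n]$, I create $x_i^P$ with label $4$ and $x_i^N$ with label $7$, join $4$ and $7$ to add the edge between the two poles, attach the supports of $x_i^P$ via the subroutine and of $x_i^N$ via the obvious variant joining $3$ with $7$, and finally rename $4 \to 5$ and $7 \to 6$. In phase two, for each clause, I build the three literal vertices one at a time, each created with label $4$, immediately joined with $5$ or $6$ according to polarity (which realizes in one operation the bulk adjacencies of a literal to all of Pos or all of Neg required in the definition of $G(\phi)$), then supported, and renamed $4 \to 7$; next I create $c_j$ with label $4$, join $4$ and $7$, attach its supports, rename $7 \to 1$ to kill the literals, and then rename $4 \to 7$ to park $c_j$ temporarily in label $7$; then I create $d_j$ with label $4$, join $4$ and $7$ to link $d_j$ to $c_j$, attach supports of $d_j$ via the subroutine, rename $7 \to 1$ to kill $c_j$, and rename $4 \to 8$ to add $d_j$ to the D pool. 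In phase three I create $u$ with label $4$, join $4$ and $8$, and attach the supports of $u$.

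The only thing that requires genuine verification, and the main obstacle, is that every join adds precisely the edges prescribed by the definition of $G(\phi)$ and no others. This follows from an invariant maintained throughout the construction: whenever a join involves label $4$ the class of label $4$ contains exactly the one vertex whose adjacencies are currently being defined; whenever the subroutine is entered or exited, labels $2$ and $3$ are empty; and labels $5,6,7,8$ always hold exactly the current Pos, Neg, current-clause-auxiliary, and D pools at the moments they take part in a join. A straightforward induction over the sequence of operations above confirms all these invariants. Since only the eight labels $1,\dots,8$ are ever used, we conclude that $G(\phi)$ has clique-width at most $8$.
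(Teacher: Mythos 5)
Your proof is correct and takes essentially the same approach as the paper: both linearize the construction (variable poles with accumulated $x^P/x^N$ pool labels, then clause gadgets one at a time, then $u$), reserve a junk label for dead vertices, and reserve two fresh labels for building support cliques, arriving at eight labels total. The paper factors out the support operation abstractly (``compute the clique-width without supports, then add 2'') and builds $c_j, d_j$ before the literal vertices of each clause, whereas you inline the support subroutine and build the literals first, but these are cosmetic reorganizations of the same argument.
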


\begin{proof}[Lemma \ref{lem:cw3}]

Let us first observe that the support operation does not significantly affect a
graph's clique-width. Indeed, if we have a clique-width expression for
$G(\phi)$ without the support vertices, we can add these vertices as follows:
each time we introduce a vertex that must be supported we instead construct the
graph induced by this vertex and its support and then rename all supporting
vertices to a junk label that is never connected to anything else. It is clear
that this can be done by adding at most three new labels: two labels for
constructing the clique (that will form the support gadget) and the junk label.
In fact, below we give a clique-width expression for the rest of the graph that
already uses a junk label (say, label $0$), that is, a label on which we never
apply a Join operation.  Hence, it suffices to compute the clique-width of
$G(\phi)$ without the support gadgets and then add $2$.

Let us then argue why the rest of the graph has constant clique-width. First,
the graph induced by $x_i^N, x_i^P$, for $i\in[n]$ is a matching. We construct
this graph using $4$ labels, say $1,2,3,4$ as follows: for each $i\in [n]$ we
introduce $x_i^N$ with label $3$, $x_i^P$ with label $4$, perform a Join
between labels $3$ and $4$, then Rename label $3$ to $1$ and label $4$ to $2$.
This constructs the matching induced by these $2n$ vertices and also ensures
that all vertices $x_i^N$ have label $1$ in the end and all vertices $x_i^P$
have label $2$ in the end.

We then introduce to the graph the clauses one by one. Specifically, for each
$j\in [m]$ we do the following: we introduce $c_j$ with label $3$, $d_j$ with
label $4$, Join labels $3$ and $4$, Rename label $4$ to label $5$; then for
each $i\in [n]$ such that we have a vertex $x_{i,j}$ we introduce that vertex
with label $4$, Join label $4$ with label $3$, and Join label $4$ with label
$1$ or $2$, depending on whether $x_{i,j}$ is connected to vertices $x_i^N$ or
$x_i^P$, then Rename label $4$ to the junk label $0$. Once all $x_{i,j}$
vertices for a fixed $j$ have been introduced we Rename label $3$ to the junk
label $0$ and move to the next clause. Finally, we introduce $u$ with label $3$
and Join label $3$ to label $5$ (which is the label shared by all $d_j$
vertices). In the end we have used $6$ labels, namely the labels
$\{0,1,2,3,4,5\}$ for $G(\phi)$ without the support vertices, so the whole
graph can be constructed with $8$ labels. \end{proof}

\begin{theorem}\label{thm:cw}
Given graph $G=(V,E)$, $k$-\textsc{Grundy Coloring} is NP-hard even when the
clique-width of the graph $cw(G)$ is a fixed constant.  \end{theorem}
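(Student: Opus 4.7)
The plan is to observe that Theorem \ref{thm:cw} follows essentially by assembling the three preceding lemmata into a polynomial-time reduction from \textsc{3-SAT}. Given an instance $\phi$ of \textsc{3-SAT} with $n$ variables and $m$ clauses, we construct the graph $G(\phi)$ described in the section, together with the target value $K := 10n+10m+1$. The construction is clearly polynomial: the core graph has $O(n+m)$ vertices, and each support operation introduces cliques whose total size is polynomial in $n+m$ (the largest support appears in step 7 and has size $O(n+m)$).

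The first step is to invoke Lemma \ref{lem:cw1} for the forward direction: if $\phi$ is satisfiable, then $\Gamma(G(\phi)) \geq K$, so $(G(\phi), K)$ is a yes-instance of \textsc{Grundy Coloring}. Conversely, Lemma \ref{lem:cw2} says that if $G(\phi)$ admits a Grundy coloring using $K$ colors, then $\phi$ is satisfiable. Hence $\phi \in \textsc{3-SAT}$ if and only if $\Gamma(G(\phi)) \geq K$, so the reduction is correct.

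Finally, to obtain hardness for bounded clique-width, we apply Lemma \ref{lem:cw3}, which guarantees that $\mathrm{cw}(G(\phi)) \leq 8$ for every formula $\phi$. Since \textsc{3-SAT} is NP-hard and the reduction runs in polynomial time while producing graphs of clique-width at most $8$, we conclude that \textsc{Grundy Coloring} is NP-hard on the class of graphs of clique-width at most $8$, establishing the theorem.

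The main obstacle in this proof is not the final assembly, which is routine, but rather the combinatorial content already carried out in Lemmata \ref{lem:cw1}--\ref{lem:cw3}; in particular, verifying that the support-clique gadgets simultaneously force the desired color propagation (Lemma \ref{lem:cw2}) and admit a constant-label clique-width expression via a junk label (Lemma \ref{lem:cw3}). Given those lemmata, nothing further is required for Theorem \ref{thm:cw} beyond stating the correspondence.
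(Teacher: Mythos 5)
Your proof is correct and follows exactly the same route as the paper: the theorem is obtained by assembling Lemmata \ref{lem:cw1}, \ref{lem:cw2}, and \ref{lem:cw3} into a polynomial-time reduction from \textsc{3-SAT} with threshold $10n+10m+1$. The paper in fact leaves this assembly implicit, so your write-up if anything makes the final step more explicit than the source.
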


\section{FPT for modular-width}\label{sec:mw}

In this section we show that \textsc{Grundy Coloring} is FPT parameterized by
modular-width. Recall that $G=(V,E)$ has modular-width $w$ if $V$ can be
partitioned into at most $w$ modules, such that each module is a singleton or
induces a graph of modular-width $w$. Neighborhood diversity is the restricted
version of this measure where modules are required to be cliques or independent
sets. 

The first step is to show that \textsc{Grundy Coloring} is FPT parameterized by
neighborhood diversity. Similarly to the standard \textsc{Coloring} algorithm
for this parameter \cite{Lampis12}, we observe that, without loss of
generality, all modules can be assumed to be cliques, and hence any color class
has one of $2^w$ possible types, depending on the modules it intersects.  We
would like to use this to reduce the problem to an ILP with $2^w$ variables,
but unlike \textsc{Coloring},  the ordering of color classes matters.  We thus
prove that the optimal solution can be assumed to have a ``canonical''
structure where each color type only appears in consecutive colors. 
We then extend the neighborhood diversity algorithm to modular-width using the
idea that we can calculate the Grundy number of each module separately, and
then replace it with an appropriately-sized clique. 

\subsection{Neighborhood diversity}\label{subsec:nd} 

Recall that two vertices $u,v\in V$ of a graph $G=(V,E)$ are \emph{twins} if
$N(u)\setminus \{v\} = N(v)\setminus \{u\}$, and they are  called \emph{true}
(respectively, \emph{false}) twins if they are adjacent (respectively,
non-adjacent). A \emph{twin class} is a maximal set of vertices that are
pairwise twins.  It is easy to see that any  twin class is either a clique or
an independent set.  We say that a graph $G=(V,E)$ has \emph{neighborhood
diversity} $w$ if $V$ can be partitioned into at most $w$ twin classes. 

Let $G=(V,E)$ be a graph of neighborhood diversity $w$ with a vertex partition
$V=W_1\dot{\cup} \ldots \dot{\cup} W_w$ into twin classes.  It is obvious that
in any Grundy Coloring of $G$, the vertices of a true twin class must have all
distinct colors because they form a clique. Furthermore, it is not difficult to
see that the vertices of a false twin class must be colored by the same color
because all of their vertices have the same neighbors. 

In fact, we can show that we can remove vertices from a false twin class
without affecting the Grundy number of the graph:

\begin{lemma}\label{lem:falsesingle} Let $G=(V,E)$ be a graph of neighborhood
diversity $w$ with a vertex partition $V=W_1\dot{\cup} \ldots \dot{\cup} W_w$
into twin classes.  Let $W_i$ be a false twin class having at least two
distinct vertices $u,v\in W_i$.  Then $G- v$ has $k$-Grundy coloring if and
only if $G$ has.  \end{lemma}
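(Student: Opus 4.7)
The plan is to first establish a structural observation: in any Grundy coloring of $G$, the false twins $u$ and $v$ must receive the same color. Once we have this, both directions of the equivalence follow with very little extra work.

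For the structural claim, I would argue by contradiction. Suppose $u \in V_i$ and $v \in V_j$ in a Grundy coloring, with $i < j$. Since every vertex of color $j$ must have a neighbor of color $i$, there exists $w \in V_i$ with $w \in N(v)$. Because $u$ and $v$ are false twins we have $N(u) = N(v)$, so $w \in N(u)$ as well. But then $u$ and $w$ are two adjacent vertices both lying in $V_i$, contradicting the fact that $V_i$ must be independent. Hence $u$ and $v$ are in the same class.

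For the forward direction, take a $k$-Grundy coloring $V_1,\dots,V_k$ of $G$ and let $V_i$ be the common class of $u$ and $v$. Consider the partition of $V(G)\setminus\{v\}$ obtained by replacing $V_i$ with $V_i\setminus\{v\}$. This class still contains $u$, so it is non-empty, and it remains independent. The only thing to verify is that $V_i\setminus\{v\}$ still dominates $V_{i+1}\cup\cdots\cup V_k$ in $G-v$. But any vertex $x \in V_j$ with $j>i$ that had $v$ as its witness neighbor in $V_i$ satisfies $x \in N(v) = N(u)$, so $u$ is an equally good witness in $V_i\setminus\{v\}$. The other classes are unchanged, so the partition is a $k$-Grundy coloring of $G-v$.

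For the backward direction, take a $k$-Grundy coloring of $G-v$ and let $V_i$ be the class containing $u$. I extend it to $G$ by adding $v$ to $V_i$. The set $V_i\cup\{v\}$ is independent because $u$ and $v$ are non-adjacent and $N(v)=N(u)$ implies $v$ has no neighbor among the other vertices of $V_i$. The Grundy property for $v$ requires a neighbor in every $V_{i'}$ with $i' < i$, which holds because $u$ had such neighbors and $N(v)=N(u)$. All other vertices keep their old certificates. The main (and essentially only) subtle point is the structural lemma that false twins share a color in every Grundy coloring, and once that is established the rest is routine bookkeeping about non-emptiness and domination.
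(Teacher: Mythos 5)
Your proof is correct and takes essentially the same approach as the paper: the key observation is that false twins must receive the same color in any Grundy coloring (which the paper states in one line, "since they have the same neighbors," while you spell out the contradiction argument), after which removing or re-inserting $v$ is routine. The only difference is that you give an explicit coloring-extension argument for the direction "$G-v$ has a $k$-Grundy coloring $\Rightarrow$ $G$ has one," whereas the paper dismisses this as trivial by appealing to monotonicity of the Grundy number under taking induced subgraphs (place $v$ last in the First-Fit order); both are fine.
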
 

\begin{proof} The forward implication is trivial.  To see the opposite
direction, consider an arbitrary $k$-Grundy coloring of $G$. The vertices $u,v$
must have the same color, since they have the same neighbors. Any vertex whose
color is higher than $v$ and is adjacent with $v$ must be to $u$ as well.
Since $u$ and $v$ have the same color, this implies that the same coloring
restricted to $G-v$ is a $k$-Grundy coloring.  \end{proof}

Using Lemma~\ref{lem:falsesingle}, we can reduce every false twin class into a
singleton vertex, thus from now on we may assume that every twin class is a
clique (possibly a singleton).  An immediate consequence is that that any color
class of a Grundy coloring can take at most one vertex from each twin class.
Furthermore, the colors of any two vertices from the same twin class are
interchangeable.  Therefore, a color class $V_i$ of a Grundy coloring is
precisely characterized by the set of twin classes $W_j$ that $V_i$ intersects.
For a color class $V_i$, we call the set $\{j\in [w]: W_j\cap V_i\neq
\emptyset\}$ as the \emph{intersection pattern} of $V_i$.

Let $\cal{I}$ be the collection of all sets $I\subseteq [w]$ of indices such
that $W_i$ and $W_j$ are non-adjacent for every distinct pairs $i,j\in [w]$.
It is clear that the intersection pattern of any color class is a member of
$\cal{I}$. It turns out that if $I\in \cal{I}$ appears as an intersection
pattern for more than one color classes, then it can be assumed to appear on a
consecutive set of colors.

\begin{lemma}\label{lem:moveint}
Let $G=(V,E)$ be a graph of neighborhood diversity $w$ with a vertex partition $V=W_1\dot{\cup} \ldots \dot{\cup} W_w$ into true twin classes. 
Let $V_1\dot{\cup} \ldots \dot{\cup} V_k$ be a $k$-Grundy coloring of $G$ and 
let $I_i\in \cal{I}$ be the set of indices $j$ such that $V_i\cap W_j\neq \emptyset$ for each $i\in [k]$. 
If $I_i=I_{i'}$ for some $i'\geq i+2$, then the coloring $V'_1\dot{\cup} \ldots \dot{\cup} V'_k$ where
\begin{equation*}
V'_{\ell}=
\begin{cases}
V_{i'} &\text{if }  \ell = i+1,\\
V_{\ell-1} &\text{if } i+1< \ell \le i',\\
V_{\ell} &\text{otherwise}
\end{cases}
\end{equation*}
(i.e. the coloring obtained by `inserting' $V_{i'}$ in between $V_i$ and $V_{i+1}$) is a Grundy coloring as well.
\end{lemma}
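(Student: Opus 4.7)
The plan is to verify the three conditions that make $V'_1 \dot{\cup} \ldots \dot{\cup} V'_k$ a $k$-Grundy coloring in the sense of Definition~\ref{def:grundy}: (i) it is a partition of $V$ into $k$ non-empty parts; (ii) each $V'_\ell$ is an independent set; (iii) for each $\ell \in [k-1]$, $V'_\ell$ dominates $\bigcup_{\ell < j \le k} V'_j$. Point (i) is immediate because the new coloring just permutes the original color classes $V_i, V_{i+1}, \ldots, V_{i'}$. Point (ii) is also immediate since each $V'_\ell$ coincides with some $V_m$, which is independent by hypothesis. All the work is in (iii).

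To handle (iii), I would split into cases based on where $\ell$ lies. For $\ell \le i-1$ or $\ell \ge i'$, the set $V'_\ell=V_\ell$ and the family $\{V'_j : j > \ell\}$ equals (as a multiset of sets) $\{V_j : j > \ell\}$ up to reordering, so domination is inherited directly from the original Grundy coloring. For $\ell = i$, the set of later colors $V'_{i+1} \cup \cdots \cup V'_k$ still equals $V_{i+1} \cup \cdots \cup V_k$, so the condition is again inherited. For $i+2 \le \ell \le i'$, we have $V'_\ell = V_{\ell-1}$ and the later classes $V'_{\ell+1} \cup \cdots \cup V'_k = V_\ell \cup \cdots \cup V_{i'-1} \cup V_{i'+1} \cup \cdots \cup V_k$, which is a subset of $V_\ell \cup \cdots \cup V_k$; since $V_{\ell-1}$ dominates the latter in the original coloring, domination is preserved.

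The main obstacle, and the only case that truly uses the hypothesis $I_i = I_{i'}$, is $\ell = i+1$, where $V'_{i+1} = V_{i'}$ must dominate $V_{i+1} \cup V_{i+2} \cup \cdots \cup V_{i'-1} \cup V_{i'+1} \cup \cdots \cup V_k$. Domination of $V_{i'+1} \cup \cdots \cup V_k$ comes for free from the original coloring. For any vertex $v \in V_m$ with $i+1 \le m \le i'-1$, the original Grundy condition gives a neighbor $u \in V_i$ of $v$. Let $W_j$ be the twin class containing $u$; then $j \in I_i = I_{i'}$, so there exists some $u' \in V_{i'} \cap W_j$. Since $W_j$ is a true twin class (hence a clique), $u$ and $u'$ are true twins and so $N(u) \setminus \{u'\} = N(u') \setminus \{u\}$; as $v \ne u'$ (because $v$ and $u'$ lie in different color classes), we conclude $u' \in N(v)$, giving $v$ a neighbor in $V'_{i+1}$ as required.

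Having verified (iii) in every case, $V'_1 \dot{\cup} \ldots \dot{\cup} V'_k$ satisfies Definition~\ref{def:grundy} and is therefore a $k$-Grundy coloring of $G$. The crux of the argument is the twin-swap in case $\ell = i+1$, which is exactly where the assumption that all classes are true twin classes and the equality of intersection patterns $I_i = I_{i'}$ are used together.
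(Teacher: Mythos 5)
Your proof is correct and its key step is the same as the paper's: when a twin class $W_j$ intersects both $V_i$ and $V_{i'}$ (which is exactly what $I_i=I_{i'}$ gives), a neighbor $u\in V_i$ of $v$ has a true twin $u'\in V_{i'}$, and $u'$ is then also a neighbor of $v$. The difference is one of framing: the paper verifies, vertex by vertex, that each vertex still has a neighbor of every smaller color, while you verify, class by class, the domination condition of Definition~\ref{def:grundy}. These are dual presentations of the same argument, and yours is arguably cleaner in that it isolates the single nontrivial case ($\ell=i+1$) explicitly; the paper spreads the twin argument over two vertex cases (colors in $[i+1,i'-1]$ and color $i'$), the second of which is actually not needed since those vertices only require colors $\le i$, which are untouched.

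One small slip: your first case ``$\ell\le i-1$ or $\ell\ge i'$'' asserts $V'_\ell=V_\ell$, which fails at $\ell=i'$ (there $V'_{i'}=V_{i'-1}$, since $i'\ge i+2$). This causes no harm, because $\ell=i'$ is also covered, correctly, by your third case $i+2\le\ell\le i'$; the boundary in the first case should read $\ell>i'$. Everything else checks out, including the careful observations that $v\ne u$ and $v\ne u'$ which are needed to pass from $v\in N(u)$ to $v\in N(u')$ via the true-twin identity.
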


\begin{proof}

First observe that the new coloring remains a proper coloring, so we only need
to argue that it's a valid Grundy coloring. Consider a vertex $v$ which took
color $j\le i$ in the original coloring. All its neighbors with color strictly
smaller than $j$ have retained their colors, so $v$ is still properly colored.
Suppose then that $v$ had color $j>i'$ in the original coloring. Then, $v$ has
a neighbor in each of the classes $V_1,\ldots, V_{j-1}$, which means that it
has at least one neighbor in each of the sets $V'_1,\ldots, V'_{j-1}$, so it is
still validly colored.

Suppose that $v$ had received a color $j\in [i+1,i'-1]$ in the original
coloring and receives color $j+1$ in the new coloring. We claim that for each
$j'<j+1$, $v$ has a neighbor with color $j'$. Indeed, this is easy to see for
$j'\le i$, as these vertices retain their colors; for $j'=i+1$ we observe that
$v$ has a neighbor with color $i$ in the original coloring, and each such
vertex has a true twin with color $i+1$ in the new coloring; and for $j'>i+1$,
the neighbor of $v$ which had color $j'-1$ originally now has color $j'$.

Finally, suppose that $v$ had received color $i'$ in the original coloring and
receives color $i+1$ in the new coloring. We now observe that such a vertex $v$
must have a true twin which received color $i$ in both colorings, therefore
coloring $v$ with $i+1$ is valid.  \end{proof}

The following is a consequence of Lemma~\ref{lem:moveint}.

\begin{corollary}\label{corr:intpartition} Let $G=(V,E)$ be a graph of
neighborhood diversity $w$ with a vertex partition $V=W_1\dot{\cup} \ldots
\dot{\cup} W_w$ into true twin classes.  If $G$ admits a $k$-Grundy coloring,
then there is a $k$-Grundy coloring $V_1\dot{\cup} \ldots \dot{\cup} V_k$ with
the following property: for each $j_1,j_2\in [k]$ such that $V_{j_1}$ has a
non-empty intersection with the same twin classes as $V_{j_2}$, we have that
for all $j_3\in [k]$ with $j_1\le j_3\le j_2$, $V_{j_3}$ also has non-empty
intersection with the same twin classes as $V_{j_1}$.  \end{corollary}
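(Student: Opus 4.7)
The plan is to prove the corollary by iteratively applying Lemma \ref{lem:moveint} until the stated property holds, organized as a strong induction on the number $s$ of distinct intersection patterns appearing in the coloring.

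The base case $s = 1$ is trivial, since a single pattern automatically occupies a contiguous set of positions. For the inductive step, consider a $k$-Grundy coloring with $s$ distinct patterns, and let $P$ denote the intersection pattern of $V_1$. Suppose $P$ occurs at positions $f_1 = 1 < f_2 < \cdots < f_c$ in the current coloring. I would migrate all copies of $P$ into the prefix $\{1, 2, \ldots, c\}$ via a ``bubble-up'' procedure: for $j = 1, 2, \ldots, c - 1$ in turn, if the $(j+1)$-th $P$-position is already $j + 1$, do nothing; otherwise apply Lemma \ref{lem:moveint} with $(i, i') = (j, f_{j+1})$, which requires only $f_{j+1} \geq j + 2$. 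This condition is guaranteed because after previous rounds positions $1, \ldots, j$ already carry $P$, so $f_{j+1} > j$, and if $f_{j+1} = j+1$ we are already in the skip case. The key observation is that this lemma application shifts only positions in $[j+1, f_{j+1} - 1]$ up by one, leaving all positions $> f_{j+1}$ (including the remaining $P$-positions $f_{j+2}, \ldots, f_c$) untouched, so they are preserved for subsequent rounds.

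After at most $c - 1$ applications, every $P$-class occupies a position in $[1, c]$, and the coloring remains a valid $k$-Grundy coloring by repeated invocation of Lemma \ref{lem:moveint}. The positions $c + 1, \ldots, k$ now contain no copy of $P$ and involve only $s - 1$ distinct patterns. Any further application of Lemma \ref{lem:moveint} with indices $c < i < i' \leq k$ affects only the suffix and preserves the $P$-prefix. Applying the inductive hypothesis to the suffix (viewed as the ``unresolved'' portion of the overall coloring) yields a coloring in which each of the remaining $s - 1$ patterns occupies a contiguous set of positions; together with the $P$-prefix this gives the desired property for the whole coloring.

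The main obstacle is to state the inductive hypothesis precisely enough to apply it to a \emph{suffix} of the coloring rather than to a standalone smaller Grundy coloring, since Lemma \ref{lem:moveint} is intrinsically a statement about the whole coloring of $G$. This is handled cleanly by phrasing the induction as ``any $k$-Grundy coloring whose first $\ell$ positions already form contiguous same-pattern blocks not appearing in positions $[\ell+1, k]$ can, by applications of Lemma \ref{lem:moveint} confined to indices $> \ell$, be transformed into a coloring satisfying the property.'' Once the inductive statement is framed this way, the argument above reduces $\ell$-by-$c$ extensions, and the remaining verifications (applicability of the lemma at each bubble-up step, and invariance of the suffix pattern set) are routine.
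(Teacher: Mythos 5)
The paper does not spell out a proof of Corollary~\ref{corr:intpartition}, merely asserting it as a consequence of Lemma~\ref{lem:moveint}; your argument supplies exactly the missing details via repeated application of that lemma. Your bubble-up procedure, the check that each invocation meets the hypothesis $i' \ge i+2$ with $I_i = I_{i'}$, the observation that positions beyond $f_{j+1}$ are untouched, and the strengthened ``prefix-preserving'' form of the inductive statement are all correct, so this is the same approach with the verification made explicit.
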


For a sub-collection $\cal{I}'$ of $\cal{I}$, we say that $\cal{I}'$ is
\emph{eligible} if there is an ordering $\preceq$ on $\cal{I}'$ such that for
every $I,I'\in \cal{I}'$ with $I\succeq I'$, and for every $i\in I$, there
exists $i'\in I'$ such that the twin classes $W_i$ and $W_{i'}$ are adjacent,
or $i=i'$.  Clearly, a sub-collection of an eligible sub-collection of
$\cal{I}$ is again eligible. Intuitively, the ordering that shows that a
sub-collection is eligible corresponds to a Grundy coloring where color classes
have the corresponding intersection patterns.

Now we are ready to present an FPT algorithm, parameterized by the neighborhood
diversity $w$, to compute the Grundy number.  The algorithm consists of two
steps: (i) guess a sub-collection $\cal{I}'$ of $\cal{I}$ which are used as
intersection patterns by a Grundy coloring, and  (ii) given $\cal{I}'$, we
solve an integer linear program. 

Let $\cal{I}'$ be a sub-collection of $\cal{I}$.  For each $I\in \cal{I}'$, let
$x_I$ be an integer variable which is interpreted as the number of colors for
which $I$ appears as an intersection pattern.  Now, the linear integer program
\texttt{ILP}($\cal{I}'$) for a sub-collection $\cal{I}'$ is given as the
following: 

\begin{eqnarray} 
\max \sum_{I\in \cal{I}'} x_I &&\\ 
\text{s.t.} \qquad && \nonumber\\
&&\sum_{I\in {\cal I}': i\in I} x_I = |W_i| \qquad \forall i\in [w],
\end{eqnarray} 

where each $x_I$ takes a positive integer value.

\begin{lemma}\label{lem:maxgood}
Let $G=(V,E)$ be a graph of neighborhood diversity $w$ with a vertex partition $V=W_1\dot{\cup} \ldots \dot{\cup} W_w$ into true twin classes. 
The maximum value of \texttt{ILP}($\cal{I}'$) over all eligible $\cal{I}'\subseteq \cal{I}$ equals the Grundy number of $G$.
\end{lemma}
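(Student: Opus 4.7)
My plan is a two-direction proof of the claimed equality. For the $\ge$ direction, take a Grundy coloring of $G$ achieving $\Gamma(G)$ and apply Corollary~\ref{corr:intpartition} to assume that color classes sharing the same intersection pattern occupy consecutive color indices. Let $\mathcal{I}'$ collect the patterns that actually occur and set $x_I$ to the number of color classes with pattern $I$; the constraint $\sum_{I \in \mathcal{I}':\, i \in I} x_I = |W_i|$ holds because each vertex of the true twin class $W_i$ is colored exactly once and each color class, being an independent set, meets $W_i$ in at most one vertex. To certify eligibility, I would order $\mathcal{I}'$ by first occurrence: if $I \succeq I'$ and $i \in I$, pick a vertex $v \in W_i$ colored by a class with pattern $I$; by the Grundy property $v$ has a neighbor in every lower color class, in particular in one with pattern $I'$, and that neighbor lies in some $W_{i'}$ with $i' \in I'$, so either $i = i'$ or $W_i$ and $W_{i'}$ are adjacent, exactly as required by the definition of eligibility.

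For the $\le$ direction, fix an eligible $\mathcal{I}'$ with witness ordering $\preceq$ and a feasible assignment $\{x_I\}$. I will build a coloring by iterating through the patterns in $\preceq$-order and, for each $I$, creating $x_I$ consecutive color classes, each containing one fresh vertex from $W_i$ for every $i \in I$. The ILP equality guarantees that the vertex supplies match exactly, and each class is an independent set because $I \in \mathcal{I}$. To confirm the Grundy property for a vertex $v \in W_i$ in a class of pattern $I$ and color $c$, I must exhibit, for every $c' < c$, a neighbor of $v$ with color $c'$. Let $I'$ be the pattern of the class with color $c'$. If $I' = I$ then $v$ and the vertex $v'$ chosen in $W_i$ for color $c'$ are distinct members of the true twin class $W_i$, hence adjacent. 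If $I' \neq I$ then $I' \prec I$ in the witness ordering, and eligibility yields an index $i' \in I'$ with either $i = i'$ (reducing to the previous case) or $W_i$ adjacent to $W_{i'}$; in both sub-cases $v$ sees a vertex of color $c'$.

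The main subtlety will be the intra-pattern case $I' = I$ in the backward direction, which would fail if $W_i$ were an independent-set twin class; this is precisely where the earlier reduction to true twin classes via Lemma~\ref{lem:falsesingle} is indispensable. A secondary point to handle carefully is the strict positivity $x_I \ge 1$ enforced by the ILP: in the forward direction I must take $\mathcal{I}'$ to consist of exactly the patterns that occur in the chosen Grundy coloring (not a superset), so that the extracted solution is feasible; and in the backward direction, $x_I \ge 1$ simply ensures that the witness ordering $\preceq$ on $\mathcal{I}'$ really does translate into at least one color class per pattern, which is compatible with the construction above.
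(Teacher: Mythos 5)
Your proof is correct and follows essentially the same two-direction argument as the paper: for $\ge$, extract $\mathcal{I}'$ and the counts $x_I$ from an optimal Grundy coloring normalized via Corollary~\ref{corr:intpartition}, and for $\le$, unroll a feasible assignment along the witness ordering $\preceq$ into a Grundy coloring. You supply more explicit verification than the paper does at two points (why $\mathcal{I}'$ is eligible, and why the constructed coloring satisfies the Grundy property, including the intra-pattern case where true-twinness is essential), but the underlying route is identical.
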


\begin{proof} We first prove that the maximum value over all considered
\texttt{ILP}s is at least the Grundy number of $G$.  Fix a Grundy coloring
$V_1\dot{\cup} \cdots \dot{\cup} V_k$ achieving the Grundy number while
satisfying the condition of Corollary~\ref{corr:intpartition}.  Consider the
sub-collection $\cal{I}'$ of $\cal I$ used as intersection patterns in the
fixed Grundy coloring.  It is clear that $\cal{I}'$ is eligible, using the
natural ordering of the color classes.  Let $\bar{x}_I$ be the number of colors
for which $I$ is an intersection pattern for each $I\in \cal{I}'$.  It is
straightforward to check that setting the variable $x_I$ at value $\bar{x}_I$
satisfies the constraints of \texttt{ILP}($\cal{I}'$), because all vertices of
each twin class are colored exactly once.  Therefore, the objective value of
\texttt{ILP}($\cal{I}'$) is at least the Grundy number.

To establish the opposite direction of inequality, let $\cal{I}'$ be an eligible sub-collection of $\cal{I}$ achieving the 
maximum ILP objective value. Notice that \texttt{ILP}($\cal{I}'$) is feasible, and let $x^*_I$ be the value taken by 
the variable $x_I$ for each $I\in \cal{I}'$. 
Since $\cal{I}'$ is eligible, there exists an ordering $\preceq$ on $\cal{I}'$ such that for every $I,I'\in \cal{I}'$ with $I\succeq I'$, and for every $i\in I$, there exists $i'\in I'$ 
such that the twin classes $W_i$ and $W_{i'}$ are adjacent. 
Now, we can define the coloring $V_1\dot{\cup} \cdots \dot{\cup} V_{\ell}$ by taking 
the first (i.e. minimum element in $\preceq$) element $I_1$ of $\cal I'$ $x^*_I$ times. 
That is, each of $V_1$ up to $V_{x^*_{I_1}}$ contains precisely one vertex of $W_i$ for each $i\in I$. 
The succeeding element $I_2$ similarly yields the next $x^*_{I_2}$ colors, and so on. 
From the constraint of \texttt{ILP}($\cal{I}'$), we know that the constructed coloring indeed partitions $V$. 
The eligibility of $\cal I'$ ensure that this is a Grundy coloring. Finally, observe that 
the number of colors in the constructed coloring equals the objective value of \texttt{ILP}($\cal{I}'$). 
This proves that the latter value is the lower bound for the Grundy number.
\end{proof}

\begin{theorem}\label{thm:grundynd} Let $G=(V,E)$ be a graph of neighborhood
diversity $w$.  The Grundy number of $G$ can be computed in time
$2^{O(w2^{w})}n^{O(1)}$.    \end{theorem}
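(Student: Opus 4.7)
The plan is to turn Lemma~\ref{lem:maxgood} into an algorithm by enumerating candidate intersection-pattern sub-collections and solving one bounded-dimension ILP for each. First I would preprocess $G$ using Lemma~\ref{lem:falsesingle}, repeatedly deleting one vertex from every false twin class until all remaining twin classes are cliques. This takes polynomial time, preserves $\Gamma(G)$, and preserves the bound $w$ on the number of twin classes, so from now on I may assume the partition $V = W_1 \dot\cup \cdots \dot\cup W_w$ consists of true twin classes, which is the hypothesis of Lemma~\ref{lem:maxgood}.

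Next I would compute the family $\mathcal{I}$ of allowed intersection patterns. Since $\mathcal{I} \subseteq 2^{[w]}$, we have $|\mathcal{I}| \le 2^w$, and computing $\mathcal{I}$ takes time polynomial in $2^w$ and $w$. Then I enumerate every sub-collection $\mathcal{I}' \subseteq \mathcal{I}$: there are at most $2^{|\mathcal{I}|} \le 2^{2^w}$ of them, which is within the target budget.

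For each candidate $\mathcal{I}'$ I perform two steps. First, I test whether $\mathcal{I}'$ is eligible by searching for a linear ordering $\preceq$ witnessing the condition of the definition. A brute-force check over all orderings of $\mathcal{I}'$ takes at most $(2^w)! = 2^{O(w \cdot 2^w)}$ time, which already fits the budget; a greedy approach that repeatedly peels off a maximal element dominated by all remaining ones (in the sense required for eligibility) also works in polynomial time in $|\mathcal{I}'|$. Second, if $\mathcal{I}'$ is eligible, I solve the integer program \texttt{ILP}($\mathcal{I}'$). This ILP has at most $|\mathcal{I}'| \le 2^w$ variables and $w$ equality constraints with right-hand sides at most $n$, so by Lenstra's algorithm (or Kannan's improvement) it can be solved in time $(2^w)^{O(2^w)} \cdot \mathrm{poly}(n) = 2^{O(w \cdot 2^w)} \cdot n^{O(1)}$. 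I keep the maximum objective value over all eligible $\mathcal{I}'$.

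By Lemma~\ref{lem:maxgood}, the maximum value recorded equals $\Gamma(G)$. Summing the costs gives $2^{2^w} \cdot 2^{O(w \cdot 2^w)} \cdot n^{O(1)} = 2^{O(w \cdot 2^w)} \cdot n^{O(1)}$, matching the claim. The bulk of the conceptual work is already done in Lemmas~\ref{lem:falsesingle}, \ref{lem:moveint} and \ref{lem:maxgood}; the main hurdle here is simply identifying that the number of candidate sub-collections is at most doubly exponential in $w$ and that the resulting ILPs are small enough for a single-exponential ILP solver to keep the total running time inside the stated bound.
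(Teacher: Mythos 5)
Your proposal follows the same route as the paper: reduce to true twin classes via Lemma~\ref{lem:falsesingle}, enumerate all $\mathcal{I}'\subseteq\mathcal{I}$, test eligibility, solve \texttt{ILP}($\mathcal{I}'$) with Lenstra's algorithm, and take the maximum justified by Lemma~\ref{lem:maxgood}. (You even correct a small slip in the paper, which writes $w!$ for the number of orderings of $\mathcal{I}'$ where the right bound is $|\mathcal{I}'|!\le (2^w)! = 2^{O(w2^w)}$, still within budget.)
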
 

\begin{proof} We first compute the partition $V=W_1\dot{\cup} \ldots \dot{\cup}
W_w$ of $G$ into twin classes in polynomial time.  By
Lemma~\ref{lem:falsesingle}, we may assume that each $W_i$ is a true twin class
by discarding some vertices of $G$, if necessary.  Next, we compute $\cal I$
and notice that $\cal I$ contains at most $2^w$ elements. For each $\cal
I'\subseteq \cal I$ we verify if $\cal I'$ is eligible (this can be done in by
trying all $w!$ orderings of the elements of $\cal I'$).

For each eligible sub-collection of $\cal I'$ of $\cal I$, we solve
\texttt{ILP}($\cal{I}'$) using Lenstra's algorithm which runs in time
$O(n^{2.5n +o(n)})$, where $n$ denotes the number of variables in a given
linear integer program \cite{Lenstra83,Kannan87,FrankT87}. As
\texttt{ILP}($\cal{I}'$) contains as many as $|{\cal I}'| \leq 2^w$ variables,
this lead to an ILP solver running in time $2^{O(w2^w)}$.  Due to
Lemma~\ref{lem:maxgood}, we can correctly compute the Grundy number by solving
\texttt{ILP}($\cal{I}'$) for each eligible $\cal I'$ and taking the maximum.
\end{proof}

\subsection{Modular-width}

Let $G=(V,E)$ be a graph. A \emph{module} is a set $X\subseteq V$ of vertices
such that $N(u)\setminus X=N(v)\setminus X$ for every $u,v\in X$, that is,
their neighborhoods coincide outside of $X$. Equivalently, $X$ is a module if
all vertices of $V\setminus X$ are either connected to all vertices of $X$ or
to none. The modular width of a graph $G=(V,E)$ is defined recursively as
follows: (i) the modular width of a singleton vertex is $1$ (ii) $G$ has
modular width at most $k$ if and only if there exists a partition
$V=V_1\dot{\cup} \ldots \dot{\cup} V_k$, such that for all $i\in[k]$, $V_i$ is
a module and $G[V_i]$ has modular width at most $k$.

Our main tool in this section will be the following lemma which will allow us
to reduce \grundy\ parameterized by modular width to the same problem
parameterized by neighborhood diversity. We will then be able to invoke Theorem
\ref{thm:grundynd}. The idea of the lemma is that once we compute the Grundy
number of a module of a graph $G$ we can remove it and replace it with an
appropriately sized clique without changing the Grundy number of $G$.

\begin{lemma}\label{lem:mw} Let $G=(V,E)$ be a graph and $X\subseteq S$ be a
module of $G$.  Let $G'$ be the graph obtained by deleting $X$ from $G$ and
replacing it with a clique $X'$ of size $\Gamma(G[X])$, such that in $G'$ we
have that all vertices of $X'$ are connected to all neighbors of $X$ in $G$.
Then $\Gamma(G)= \Gamma(G')$. \end{lemma}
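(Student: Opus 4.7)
The plan is to prove both inequalities $\Gamma(G)\ge\Gamma(G')$ and $\Gamma(G')\ge\Gamma(G)$ by explicit translation of Grundy colorings between the two graphs. Write $N$ for the set of vertices of $V(G)\setminus X$ adjacent to $X$ (equivalently, by the module property, to every vertex of $X$ or $X'$), and $M:=V(G)\setminus(X\cup N)$; note that $M$ has no edge to $X\cup X'$. As a preliminary, I will show that in any Grundy coloring $c$ of $G$, letting $S$ be the set of colors used on $X$ and $C_N$ the set of colors used on $N$, we have $S\cap C_N=\emptyset$ (since $X$ and $N$ are fully joined, two vertices sharing a color cannot lie one in each). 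Moreover, for a vertex $v\in X$ colored $s\in S$, every neighbor of $v$ whose color lies in $S$ must itself lie in $X$: such a neighbor cannot be in $M$ (no edge) nor in $N$ (by the disjointness). Writing $S=\{s_1<\cdots<s_{|S|}\}$, the relabeling $s_j\mapsto j$ of $c|_X$ is therefore a Grundy coloring of $G[X]$, which both yields $|S|\le\Gamma(G[X])=:q$ and shows that $[1,s_i-1]\subseteq S\cup C_N$ for every $i$.

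For the direction $\Gamma(G)\ge\Gamma(G')$, I take an optimal Grundy coloring of $G'$; since $X'$ is a clique of size $q$ it uses exactly $q$ distinct colors $s'_1<\cdots<s'_q$, which are disjoint from the colors on $N$. I keep the coloring on $V\setminus X$ fixed, take any Grundy coloring of $G[X]$ with colors $1,\dots,q$, and relabel $i\mapsto s'_i$ to color $X$. A vertex $v\in X$ of color $s'_i$ sees $s'_1,\dots,s'_{i-1}$ from other $X$-vertices by the relabeled coloring of $G[X]$, and the remaining colors of $[1,s'_i-1]$ come from exactly the same $N$-vertices that supplied them to the corresponding $X'$-vertex in $G'$; each vertex of $N$ continues to see every color of $\{s'_1,\dots,s'_q\}$ from $X$ (since all $q$ colors are used on $X$), hence every color it previously received from $X'$.

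The direction $\Gamma(G')\ge\Gamma(G)$ is the main obstacle: given an optimal Grundy coloring $c$ of $G$ with $k=\Gamma(G)$ colors, the set $S$ of colors on $X$ may be strictly smaller than $|X'|=q$, so I must invent $q-|S|$ extra colors for the remaining vertices of $X'$ without clashing with $N$ or breaking the Grundy property. I keep $c$ on $V\setminus X$ and build the color set $A$ for $X'$ iteratively: initialize $A:=S$ and, while $|A|<q$, insert into $A$ the smallest positive integer missing from $A\cup C_N$. Properness follows from $A\cap C_N=\emptyset$. The Grundy condition on $X'$ reduces to the claim that $[1,a-1]\subseteq A\cup C_N$ for every $a\in A$: for the initial elements of $S$ this is the preliminary observation, and for every later insertion it is forced by the minimality of the insertion rule. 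Each vertex of $N$ now receives through $X'$ the colors of $A\supseteq S$, so every color previously supplied by $X$ remains available, while vertices of $M$ are untouched. Finally, color $k$ is still used in the new coloring---either $k\in S\subseteq A$, or $k$ appears on $V\setminus X$ and is preserved---so $\Gamma(G')\ge k$.
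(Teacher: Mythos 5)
Your proof is correct. The $\Gamma(G)\ge\Gamma(G')$ direction is essentially the paper's: take a $q$-Grundy coloring of $G[X]$ and map color $i$ to the $i$-th smallest color used on $X'$ (you make the color-ordering explicit, which the paper glosses over by speaking of vertices $x_1,\dots,x_k$ without fixing the labeling). The $\Gamma(G')\ge\Gamma(G)$ direction is genuinely different. The paper first shows $X$ uses $k'\le q$ colors, then passes to the induced subgraph $G''$ of $G'$ obtained by keeping only $k'$ vertices of $X'$, and invokes the fact that Grundy number is monotone under induced subgraphs to get $\Gamma(G')\ge\Gamma(G'')\ge\Gamma(G)$. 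You instead keep all $q$ vertices of $X'$ and greedily pad $S$ with the smallest colors missing from $S\cup C_N$ until you have $q$ of them; the minimality of each insertion guarantees the prefix condition $[1,a-1]\subseteq A\cup C_N$, so the padded coloring is Grundy. Your version is a bit more hands-on and self-contained (no appeal to the induced-subgraph monotonicity of $\Gamma$), and your preliminary observation ($S\cap C_N=\emptyset$, relabeling $c|_X$ yields a Grundy coloring of $G[X]$) is a cleaner justification of the bound $|S|\le q$ than the paper's color-class deletion argument. The paper's route is shorter if one already takes monotonicity for granted. Both are valid.
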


\begin{proof}

Let $k=\Gamma(G[X])=|X'|$. First, let us show that $\Gamma(G')\ge \Gamma(G)$.
Take a Grundy coloring of $G$. Our main observation is that the vertices of $X$
are using at most $k$ distinct colors in the coloring of $G$. To see this,
suppose for contradiction that the vertices of $X$ are using at least $k+1$
colors. We will show how to obtain a Grundy coloring of $G[X]$ with at least
$k+1$ colors. As long as there is a color in the Grundy coloring of $G$ which
does not appear in $X$, let $c$ be the highest such color. We delete from $G$
all vertices which have color $c$, and decrease by $1$ the color of all
vertices that have color greater than $c$. This modification gives us a valid
Grundy coloring of the remaining graph, without decreasing the number of
distinct colors used in $X$. Repeating this exhaustively results in a graph
where every color is used in $X$. Since $X$ is a module, that means that the
resulting graph is $G[X]$, and we have obtained a Grundy coloring of $G[X]$
with $k+1$ or more colors, contradiction.

Assume then that in the optimal Grundy coloring of $G$, the vertices of $X$ use
$k'\le k$ distinct colors. Let $G''$ be the induced subgraph of $G'$ obtained
by deleting vertices of $X'$ so that there are exactly $k'$ such vertices left
in the graph. We claim $\Gamma(G')\ge \Gamma(G'')\ge \Gamma (G)$. The first
inequality follows from the standard fact that Grundy coloring is closed under
induced subgraphs (indeed, in the First-Fit formulation of the problem we can
place the deleted vertices of $G'$ at the end of the ordering). To see that
$\Gamma(G'')\ge \Gamma(G)$ we take the optimal coloring of $G$ and use the same
coloring in $V\setminus X$; furthermore, for each distinct color used in a
vertex of $X$ we color a vertex of $X'$ with this color. Observe that this is a
proper coloring of $G''$. Furthermore, for each $v\in V\setminus X$, the set of
colors that appears in $N(v)$ is unchanged; while for $v\in X'$, $v$ sees at
least the same colors in its neighborhood as a vertex of $X$ that received the
same color.

Let us also show that $\Gamma(G) \ge \Gamma(G')$. Consider a $k$-Grundy
coloring of $G[X]$ and let $X_1, X_2, \ldots, X_k$ be the corresponding
partition of $X$. Label the vertices of $X'$ as $x_1,\ldots, x_k$. We will now
show how to transform a Grundy coloring of $G'$ to a Grundy coloring of $G$: we
use the same colors as in $G'$ for all vertices in $V\setminus X$; and we use
for each vertex of $X_i$ the same color that is used for $x_i$ in $G'$. This is
a proper coloring, as each $X_i$ is an independent set, the vertices of $X'$
use distinct colors in $G'$ (as they form a clique), and a vertex connected to
$X$ in $G$ is also connected to all of $X'$ in $G'$. Furthermore, each vertex
$v\in V\setminus X$ sees the same set of colors in its neighborhood in $G$ and
in $G'$: if $v$ is not connected to $X$ its neighborhood is completely
unchanged, while if it is $v$ sees in $X$ the same $k$ colors that were used in
$X'$. Finally, for each $i\in[k]$, each vertex of $X_i$ sees the same colors in
its neighborhood as $x_i$ does in $G'$.  \end{proof}

We can now prove the main result of this section.

\begin{theorem}\label{thm:mw} Let $G=(V,E)$ be a graph of modular-width $w$.
The Grundy number of $G$ can be computed in time $2^{O(w2^w)}n^{O(1)}$.
\end{theorem}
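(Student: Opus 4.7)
The plan is to combine the recursive structure of modular-width with Lemma~\ref{lem:mw} and the neighborhood-diversity algorithm from Theorem~\ref{thm:grundynd}. The modular-width decomposition of $G$ gives us a partition $V=V_1\dot{\cup}\cdots\dot{\cup}V_w$ into at most $w$ modules, each of which induces a graph of modular-width at most $w$, and so on recursively until we reach singletons. We process this decomposition tree in a bottom-up manner: at each internal node corresponding to a module $X$ whose children $X_1,\ldots,X_{w'}$ (with $w'\le w$) have already been processed, we know $\Gamma(G[X_i])$ for each $i$, and we would like to determine $\Gamma(G[X])$.

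The key step is to observe that once the children have been processed, we can invoke Lemma~\ref{lem:mw} on each $X_i$ and replace it by a clique of size $\Gamma(G[X_i])$, without changing the Grundy number of $G[X]$. After doing this for all $i$, the resulting auxiliary graph $G'[X]$ has the following structure: it has at most $w$ parts, each part is a clique, and any two parts are either completely adjacent or completely non-adjacent (this follows because each $X_i$ is a module in $G[X]$, so the adjacency relation between parts inherited from $G[X]$ is well-defined). In other words, $G'[X]$ is a graph of \emph{neighborhood diversity} at most $w$. We then apply Theorem~\ref{thm:grundynd} to compute $\Gamma(G'[X])=\Gamma(G[X])$ in time $2^{O(w2^w)}n^{O(1)}$.

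Proceeding up the decomposition tree, we eventually compute $\Gamma(G)$ at the root. For correctness, iterated application of Lemma~\ref{lem:mw} guarantees that at every step the Grundy number is preserved; we only need to note that replacing a module with a clique of the appropriate size does not affect the module structure seen by the rest of the graph (since a clique remains a module, and the external adjacencies are unchanged), so the recursion is well-defined.

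For running time, the modular decomposition can be computed in polynomial time, and it has $O(n)$ nodes in total. At each node we do one call to the neighborhood-diversity algorithm of Theorem~\ref{thm:grundynd} on a graph of neighborhood diversity at most $w$, costing $2^{O(w2^w)}n^{O(1)}$. Summing over all nodes gives the claimed bound $2^{O(w2^w)}n^{O(1)}$. The only subtle point — and the one requiring care when writing up the full proof — is verifying that Lemma~\ref{lem:mw} can be applied iteratively in this bottom-up fashion, i.e., that after replacing a child module by a clique the ancestor modules remain modules of the modified graph; this follows immediately from the fact that the replacement preserves the partition of external neighborhoods, but it is the one place where a careful argument is needed.
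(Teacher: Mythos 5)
Your proof is correct and follows essentially the same approach as the paper: recurse through the modular decomposition, use Lemma~\ref{lem:mw} to replace each non-trivial module with a clique of size equal to its Grundy number, and then invoke the neighborhood-diversity algorithm of Theorem~\ref{thm:grundynd} on the resulting graph. The paper phrases this as a top-down recursion while you phrase it bottom-up over the decomposition tree, but the underlying argument and running-time analysis are the same.
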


\begin{proof}

Given a graph $G=(V,E)$ of modular width $w$ it is known that we can compute a
partition of $V$ into at most $w$ modules $V_1,\ldots,V_w$ \cite{TedderCHP08}.
If one of these modules $V_i$ is not a clique or an independent set, we call
this algorithm recursively on $G[V_i]$ (which also has modular width $w$) and
compute $\Gamma(G[V_i])$. Then, by Lemma \ref{lem:mw} we can replace $V_i$ in
$G$ with a clique of size $\Gamma(G[V_i])$. Repeating this produces a graph
where each module is a clique or an independent set. But then $G$ has
neighborhood diversity $w$, so we can invoke Theorem \ref{thm:grundynd}.
\end{proof}

\section{Conclusions}

We have shown that \textsc{Grundy Coloring} is a natural problem that displays
an interesting complexity profile with respect to some of the main graph
widths. One question left open with respect to this problem is its complexity
parameterized by feedback vertex set.  A further question is the tightness of
our obtained results under the ETH. The algorithm we obtain for pathwidth has
running time with parameter dependence $2^{O(pw^2)}$. Is this optimal or is it
possible to do better? Similarly, our reduction for treewidth shows that it's
not possible to solve the problem is $n^{o(\sqrt{tw})}$, but the best known
algorithm runs in $n^{O(tw^2)}$. Can this gap be closed? 

A broader question is also whether we can find other examples of natural
problems that separate the parameters treewidth and pathwidth. The reason that
\grundy\ turns out to be tractable for pathwidth is purely combinatorial (the
value of the optimal is bounded by a function of the parameter). In other
words, the ``reason'' why this problem becomes easier for pathwidth is not that
we are able to formulate a different algorithm, but that the same algorithm
happens to become more efficient. It would be interesting to find some natural
problem for which pathwidth offers algorithmic footholds in comparison to
treewidth that cannot be so easily explained. One possible candidate for this
may be \textsc{Packing Coloring} \cite{KimLMP18}.


\bibliography{grundy}

\end{document}